\renewcommand{\section}{\@startsection
  {section}%
  {1}%
  {0mm}%
  {-1\baselineskip}%
  {0.5\baselineskip}%
  {\normalfont\large\bfseries}%
}
\renewcommand{\subsection}{\@startsection
  {subsection}%
  {2}%
  {0mm}%
  {-1\baselineskip}%
  {0.5\baselineskip}%
  {\normalfont\large\itshape}%
}
\renewcommand{\subsubsection}{\@startsection
  {subsubsection}%
  {3}%
  {0mm}%
  {-1\baselineskip}%
  {0.5\baselineskip}%
  {\normalfont\itshape}%
}
\newsavebox{\tempbox}
\renewcommand{\@makecaption}[2]{
  \vspace{10pt}
  \sbox{\tempbox}{\textbf{#1.} #2}
  \ifthenelse{\lengthtest{\wd\tempbox > \linewidth}}{
    \textbf{#1.} #2\par
  }{
    \begin{center}
      \textbf{#1.} #2
    \end{center}
  }
}
\numberwithin{equation}{section}
\numberwithin{figure}{section}
\newtheoremstyle{mythm}
  {}
  {}
  {\itshape}
  {}
  {\bfseries}
  {.}
  {.5em}
  {\thmname{#1}~\thmnumber{#2}\ifthenelse{\equal{\thmnote{#3}}{}}{}{~(\thmnote{#3})}}
\newtheoremstyle{mydefn}
  {}
  {}
  {\upshape}
  {}
  {\bfseries}
  {.}
  {.5em}
  {\thmname{#1}~\thmnumber{#2}\ifthenelse{\equal{\thmnote{#3}}{}}{}{~(\thmnote{#3})}}
\newtheoremstyle{myremark}
  {}
  {}
  {\upshape}
  {}
  {\itshape}
  {.}
  {.5em}
  {\thmname{#1}~\thmnumber{#2}\ifthenelse{\equal{\thmnote{#3}}{}}{}{~(\thmnote{#3})}}
\theoremstyle{mythm}
\newtheorem{theo}{Theorem}[section]
\newtheorem{lem}[theo]{Lemma}
\newtheorem{prop}[theo]{Proposition}
\newtheorem{cor}[theo]{Corollary}
\newtheorem{fact}[theo]{Fact}
\theoremstyle{mydefn}
\newtheorem{defn}[theo]{Definition}
\newtheorem{exa}[theo]{Example}
\theoremstyle{myremark}
\newtheorem{rem}[theo]{Remark}
\theoremstyle{mythm}
\newenvironment{claim}[1]{
  \medskip\par
    \noindent\textit{Claim #1.}
}{
\par\medskip
}
\newcommand{\case}[1]{\par\medskip\noindent\textit{Case #1: }}
\newenvironment{cs}{
  \begin{list}{}{%
       \setlength{\labelwidth}{2em}%
       \setlength{\leftmargin}{\labelwidth}%
       \setlength{\itemindent}{0em}%
       \setlength{\labelsep}{0pt}
       \setlength{\listparindent}{1.5em}%
       \setlength{\parsep}{0em}%
    }
    \renewcommand{\case}[1]{\item[\textit{Case ##1: }]}
  }{
  \end{list}
}
\newcommand{\uend}{\hfill$\lrcorner$}
\newcounter{erom}
\renewcommand{\theerom}{(\roman{erom})}
\newenvironment{eroman}{
  \begin{list}{\theerom}{%
      \usecounter{erom}
      \setlength{\labelwidth}{3em}%
      \setlength{\leftmargin}{\labelwidth}%
      \setlength{\itemindent}{0em}%
      \setlength{\listparindent}{0em}%
      \setlength{\topsep}{5pt}%
      \setlength{\itemsep}{5pt}%
      \setlength{\parsep}{0pt}%
    }
  }{
  \end{list}
}
\newcounter{eal}
\renewcommand{\theeal}{(\Alph{eal})}
\newcounter{nlistcounter}
\newenvironment{nlist}[1]{
  \renewcommand{\thenlistcounter}{\upshape(#1.\arabic{nlistcounter})}
  \begin{list}{\bfseries\thenlistcounter}{%
      \usecounter{nlistcounter}
      \setlength{\labelwidth}{1.5em}%
      \setlength{\leftmargin}{\labelwidth}%
      \addtolength{\leftmargin}{\labelsep}%
      \setlength{\listparindent}{0em}%
      \setlength{\topsep}{5pt}%
      \setlength{\itemsep}{5pt}%
      \setlength{\parsep}{0pt}%
    }
  }{
  \end{list}
}
\renewcommand{\fboxsep}{2mm}
\newsavebox{\fminibox}
\newlength{\fminilength}
\newenvironment{fminipage}[1][\linewidth]{
  \setlength{\fminilength}{#1-2\fboxsep-2\fboxrule}%
  \begin{lrbox}{\fminibox}\begin{minipage}{\fminilength}}{
    \end{minipage}\end{lrbox}\noindent\fbox{\usebox{\fminibox}}
}
\newlength{\pwidth}
\newenvironment{problem}[3]{
  \begin{center}
    \ifthenelse{\equal{#1}{}}%
    {\setlength{\pwidth}{\textwidth-2\parindent}}%
    {\setlength{\pwidth}{#1cm}}

    \begin{fminipage}[\pwidth]\upshape
      \ifthenelse{\equal{#3}{}}{}{{\scshape #3}}
      \begin{list}{}{
          \ifthenelse{\equal{#2}{}}%
          {\settowidth{\labelwidth}{\textit{Parameter:}}}%
          {\settowidth{\labelwidth}{\textit{#2:}}}
          
          \setlength{\leftmargin}{\labelwidth+\labelsep}
          \setlength{\itemsep}{0ex}
          \setlength{\parsep}{0ex}
          \setlength{\topsep}{0.2ex}
        }
      }{
      \end{list}
    \end{fminipage}
  \end{center}
}
\renewcommand{\mathbf}[1]{\bm{#1}}
\renewcommand{\deg}{\operatorname{deg}}
\renewcommand{\phi}{\varphi}
\newcommand{\bigmid}{\;\big|\;}
\newcommand{\Bigmid}{\;\Big|\;}
\newcommand{\formel}[1]{\textsf{\upshape\small #1}}
\newcommand{\logic}[1]{\textsf{\upshape\small #1}}
\newcommand{\LL}{\logic L}
\newcommand{\ifp}{\operatorname{ifp}}
\newcommand{\LFP}{\logic{LFP}}
\newcommand{\IFP}{\logic{IFP}}
\newcommand{\IFPC}{\logic{\IFP+C}}
\newcommand{\FOL}{\logic{FO}}
\newcommand{\FOC}{\logic{FO+C}}
\newcommand{\PTIME}{\logic{PTIME}}
\newcommand{\NP}{\logic{NP}}
\newcommand{\NL}{\logic{NLOGSPACE}}
\newcommand{\LOGSPACE}{\logic{LOGSPACE}}
\newcommand{\IFPR}{\logic{IFP+R}}
\newcommand{\CPC}{\logic{CP+C}}
\newcommand{\NN}{\mathbb N_0}
\newcommand{\PN}{\mathbb N}
\newcommand{\dagle}{\trianglelefteq}
\newcommand{\dagsle}{\triangleleft}
\newcommand{\les}{\leqslant}
\newcommand{\slex}{\le_{\textup{s-lex}}}
\renewcommand{\int}{\operatorname{int}}
\newcommand{\cone}{\gamma}
\newcommand{\comp}{\alpha}
\newcommand{\bag}{\beta}
\newcommand{\sep}{\sigma}
\newcommand{\gapp}{\gamma_{\textit{app}}}
\newcommand{\CC}{\mathcal C}
\newcommand{\CG}{\mathcal G}
\newcommand{\CK}{\mathcal K}
\newcommand{\CL}{\mathcal L}
\newcommand{\CP}{\mathcal P}
\newcommand{\MCL}{\textit{MCL}}
\newcommand{\CRD}{\mbox{$\mathcal{C\!D}$}}
\newcommand{\CFI}{\mathcal P_{\textup{CFI}}}
\begin{document}
\title{Fixed-Point Definability and Polynomial Time on Chordal Graphs and Line Graphs}
\author{Martin Grohe\\\normalsize Humboldt University Berlin}
\date{}
\maketitle

\begin{abstract}
  The question of whether there is a logic that captures polynomial time was
  formulated by Yuri Gurevich in 1988. It is still wide open and regarded as one
  of the main open problems in finite model theory and database
  theory. Partial results have been obtained for specific classes of
  structures. In particular, it is known that fixed-point logic with counting
  captures polynomial time on all classes of graphs with excluded minors. The
  introductory part of this paper is a short survey of the state-of-the-art in 
  the quest for a logic capturing polynomial time. 

  The main part of the paper is concerned with classes of graphs defined by
  excluding induced subgraphs. Two of the most fundamental such classes are
  the class of chordal graphs and the class of line graphs. We prove that
  capturing polynomial time on either of these classes is as hard as capturing
  it on the class of all graphs. In particular, this implies that fixed-point
  logic with counting does not capture polynomial time on these classes. Then
  we prove that fixed-point logic with counting does capture polynomial time
  on the class of all graphs that are both chordal and line graphs.
\end{abstract}

\section{The quest for a logic capturing PTIME}
\label{sec:survey}
Descriptive complexity theory started with Fagin's Theorem~\cite{fag74}
from 1974, stating that existential second-order logic \emph{captures}
the complexity class \NP. This means that a property of finite
structures is decidable in nondeterministic polynomial time if and
only if it is definable in existential second order logic. Similar
logical characterisations where later found for most other complexity
classes. For example, in 1982 Immerman~\cite{imm86} and independently
Vardi~\cite{var82} characterised the class \PTIME\ (polynomial time)
in terms of least fixed-point logic, and in 1983 Immerman~\cite{imm87}
characterised the classes \NL\ (nondeterministic logarithmic
space) and \LOGSPACE\ (logarithmic space) in terms of transitive
closure logic and its deterministic variant. However, these logical
characterisations of the classes \PTIME, \NL, and \LOGSPACE, and all
other known logical characterisations of complexity classes contained
in \PTIME, have a serious drawback: They only apply to properties of
\emph{ordered structures}, that is, relational structures with one
distinguished relation that is a linear order of the elements of the
structure. It is still an open question whether there are logics that
characterise these complexity classes on arbitrary, not necessarily
ordered structures. We focus on the class \PTIME\ from now on. In this
section, which is an updated version of \cite{gro08b}, we give a short survey of
the quest for a logic capturing PTIME.

\subsection{Logics capturing PTIME}\label{sec:lcp}
The question of whether there is a logic that characterises, or
\emph{captures}, \PTIME\ is subtle. If phrased naively, it has a
trivial, but completely uninteresting positive
answer. Yuri Gurevich~\cite{gur88} was the first to give a precise
formulation of the question. Instead of arbitrary finite structures,
we restrict our attention to graphs in this paper. This is no serious
restriction, because the question of whether there is a logic that
captures \PTIME\ on arbitrary structures is 
equivalent to the restriction of the question to graphs. We first need
to define what constitutes a logic. Following Gurevich, we take a very liberal,
semantically oriented approach. We identify \emph{properties} of
graphs with classes of graphs closed under isomorphism. A logic \LL\ (on graphs)
consists of a computable set of \emph{sentences} together with a
semantics that associates a property $\CP_\phi$ of graphs with each
sentence $\phi$. We say that a graph $G$ \emph{satisfies} a sentence
$\phi$, and write $G\models\phi$, if $G\in\CP_\phi$. We say that a
property $\CP$ of graphs is \emph{definable} in $\LL$ if there is a
sentence $\phi$ such that $\CP_\phi=\CP$.
A logic $\LL$
\emph{captures} \PTIME\ if the following two conditions are satisfied:
\begin{nlist}{G}
\item\label{li:c1} Every property of graphs that is decidable in
  \PTIME\ is definable in \LL.
\item\label{li:c2} There is a computable function that associates with
  every \LL-sentence $\phi$ a polynomial $p(X)$ and an algorithm $A$
  such that $A$ decides the property $\CP_\phi$ in time $p(n)$, where
  $n$ is the number of vertices of the input graph.
\end{nlist}
While condition \ref{li:c1} is obviously necessary, condition \ref{li:c2} may
seem unnecessarily complicated. The natural condition we expect to see instead
is the following condition (G.2'): Every property of graphs that is definable
in $\LL$ is decidable in \PTIME.  Note that (G.2) implies (G.2'), but that the
converse does not hold. However, (G.2') is too weak, as the following example
illustrates:

\begin{exa}
  Let 
  $\CP_1,\CP_2,\ldots$ be an arbitrary enumeration of
  all polynomial time decidable properties of graphs. Such an
  enumeration exists because there are only countably many Turing
  machines and hence only countably many decidable properties of graphs.
  Let $\LL'$ be the ``logic'' whose sentences are the natural numbers
  and whose semantics is defined by letting sentence $i$ define
  property $\CP_i$. Then $\LL'$ is a logic according to our
  definition, and it does satisfy (G.1) and (G.2'). But clearly,
  $\LL'$ is not a ``logic capturing \PTIME'' in any interesting sense.
\end{exa}

Let me remark that most natural logics that are candidates
for capturing \PTIME\ trivially satisfy (G.2). The difficulty is to
prove that they also satisfy (G.1), that is, define all \PTIME-properties.

There is a different route that leads to the same question of whether
there is a logic capturing \PTIME\ from a database-theory perspective:
After Aho and Ullman~\cite{ahoull79} had realised that SQL, the
standard query language for relational databases, cannot express all database
queries computable in polynomial time, Chandra and
Harel~\cite{chahar82} asked for a recursive enumeration of the class of all
relational database queries computable in polynomial time.  It turned
out that Chandra and Harel's question is equivalent to Gurevich's
question for a logic capturing \PTIME, up to a minor technical
detail.\footnote{In Chandra and Harel's version of the question,
  condition (G.2) needs to be replaced by the following condition (CH.2):
There is a computable function that associates with every \LL-sentence
  $\phi$ an algorithm $A$ such that $A$
  decides the property $\CP_\phi$ in polynomial time. The difference
  between (G.2) and (CH.2) is that in (CH.2) the polynomial bounding
  the running time of the algorithm $A$ is not required to be
  computable from $\phi$.}

The question of whether there is a logic that captures \PTIME\ is
still wide open, and it is considered one of the main open problems in
finite model theory and database theory. Gurevich conjectured
that there is no logic capturing \PTIME. This would not only imply
that $\PTIME\neq\NP$ --- remember that by Fagin's Theorem there is a
logic capturing \NP\ --- but it would actually have interesting
consequences for the structure of the complexity class
\PTIME. Dawar~\cite{daw95} proved a dichotomy theorem stating that, depending on the answer to the
question, there are two fundamentally different possibilities: If
there is a logic for \PTIME, then the structure of $\PTIME$ is very simple; all \PTIME-properties
are variants or special cases of just one problem. If there is no
logic for \PTIME, then the structure of \PTIME\ is so complicated that it eludes all attempts for
a classification. The formal statement of the first possibility is that
there is a complete problem for \PTIME\ under first-order
reductions. The formal statement of the second possibility is that the
class of \PTIME-properties is not recursively enumerable.\footnote{The version
  of recursive enumerability used here is not exactly the same as the one
  considered by Chandra and Harel~\cite{chahar82}; the difference is
  essentially the same as the difference between conditions (G.2) and (CH.2)
  discussed earlier.}

\subsection{Fixed-point logics}
Fixed-point logics play an important role in finite-model theory, and in
particular in the quest for a logic capturing \PTIME. Very briefly, the
fixed-point logics considered in this context are extensions of first-order
logic by operators that formalise inductive definitions. We have already
mentioned that \emph{least fixed-point logic} \LFP\ captures polynomial time
on ordered structures; this result is known as the \emph{Immerman-Vardi
Theorem}. For us, it will be more convenient to work with \emph{inflationary
  fixed-point logic} \IFP, which was shown to have the same expressive power
as \LFP\ on finite structures by Gurevich and Shelah~\cite{gurshe86} and on
infinite structures by Kreutzer~\cite{kre04}.

\IFP\ does not capture polynomial time on all finite structures. The most
immediate reason is the inability of the logic to count. For example, there is
no \IFP-sentence stating that the vertex set of a graph has even cardinality;
obviously, the graph property of having an even number of vertices is
decidable in polynomial time. This led Immerman~\cite{imm87a} to extending
fixed-point logic by ``counting operators''. The formal definition of
fixed-point logic with counting operators that we use today,
\emph{inflationary fixed-point logic with counting} \IFPC, is due to Grädel
and Otto~\cite{graott99}. \IFPC\ comes surprisingly close to
capturing \PTIME. Even though Cai, Fürer, and
Immerman~\cite{caifurimm92} gave an example of a property of graphs that is
decidable in \PTIME, but not definable in \IFPC, it turns out that the logic
does capture \PTIME\ on many interesting classes of structures.

\subsection{Capturing PTIME on classes of graphs}

Let $\CC$ be a class of graphs,
which we assume to be closed under isomorphism. We say that a logic \LL\ captures \PTIME\ \emph{on $\CC$} if
it satisfies the following two conditions:
\begin{nlist}{G}
\item[\bfseries (G.1)$_{\CC}$] For every property $\CP$ of graphs that is decidable in \PTIME\ there is an
  \LL-sentence $\phi$ such that for all graphs $G\in\mathcal C$ it holds
  that $G\models\phi$ if and only if $G\in\CP$.
\item[\bfseries (G.2)$_{\CC}$] There is a computable function that associates with every \LL-sentence
  $\phi$ a polynomial $p(X)$ and an algorithm $A$ such that given
  a graph $G\in\CC$, the algorithm $A$
  decides if $G\models\phi$ in time $p(n)$, where $n$ is the number of
  vertices of $G$.
\end{nlist}
Note that these conditions coincide with conditions (G.1) and (G.2) if $\CC$ is
the class of all graphs.

The first positive result in this direction is due to Immerman and
Lander~\cite{immlan90}, who proved that \IFPC\ captures \PTIME\ on the class
of all trees. In 1998, I proved that $\IFPC$ captures $\PTIME$ on the class of
all planar graphs~\cite{gro98a} and around the same time, Julian Mari{\~{n}}o
and I proved that $\IFPC$ captures $\PTIME$ on all classes of structures of
bounded tree width \cite{gromar99}. In~\cite{gro08a}, I proved the same result
for the class of all graphs that
have no complete graph on five vertices, $K_5$, as a minor. A \emph{minor} of graph
$G$ is a graph $H$ that can be obtained from a subgraph of $G$ by contracting
edges. We say that a class $\CC$ of
graphs \emph{excludes a minor} if there is a graph $H$ that is not a minor of
any graph in $\CC$. Very recently, I proved that $\IFPC$ captures $\PTIME$ on
all classes of graphs that exclude a minor~\cite{gro10}.

In the last few years, maybe as a consequence of Chudnowsky,
Robertson, Seymour, and Thomas's \cite{crst06} proof of the strong
perfect graph theorem, the focus of many graph theorists has shifted
from graph classes with excluded minors to graph classes defined by
excluding induced subgraphs. One of the most basic and important example of
such a class is the class of \emph{chordal graphs}. A cycle $C$ of a
graph $G$ is \emph{chordless} if it is an induced subgraph. A graph is
\emph{chordal} (or \emph{triangulated}) if it has no chordless cycle of length
at least four. Figure~\ref{fig:chordal-line}(a) shows an example of a
chordal graph. All chordal graphs are
\emph{perfect}, which means that the graphs themselves and all their
induced subgraphs have the chromatic number equal to the clique number. Chordal
graphs have a nice and simple structure; they can be decomposed into a
tree of cliques. A second important example is the class of \emph{line
  graphs}. The line graph of a graph $G$ is the graph $L(G)$ whose
vertices are the edges of $G$, with two edges being adjacent in $L(G)$
if they have a common endvertex in $G$. Figure~\ref{fig:chordal-line}(b) shows an example of a
line graph. The class of all line graphs
is closed under taking induced subgraphs. Beineke~\cite{bei70} gave a
characterisation of the class of line graphs (more
precisely, the class of all graphs isomorphic to a line graph) by a
family of nine excluded subgraphs. An extension of the class of line
graphs, which has also received a lot of attention in the literature, is the class of \emph{claw-free} graphs. A graph is claw-free if it does
not have a vertex with three pairwise nonadjacent neighbours, that is, if it
does not have a \emph{claw} (displayed in Figure~\ref{fig:claw}) as an induced
subgraph. It is easy to see that all line graphs are claw-free. Recently,
Chudnowsky and Seymour (see \cite{chusey05}) developed a structure theory for
claw-free graphs.
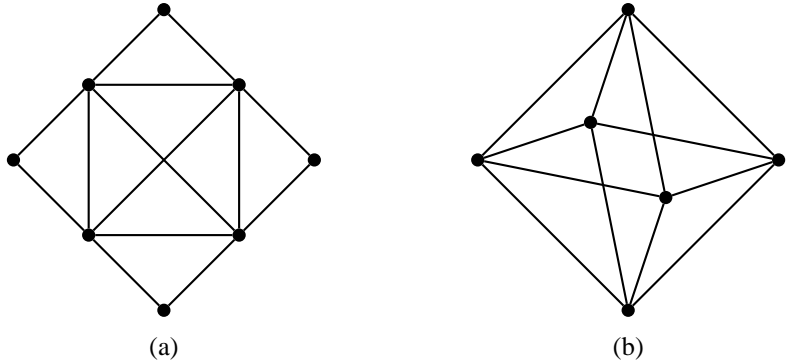
\begin{figure}
  \centering
  \begin{pspicture}(0,-0.5)(4,4)
    \dotnode(1,3){A}
    \dotnode(3,3){B}
    \dotnode(3,1){C}
    \dotnode(1,1){D}
    \dotnode(2,4){E}
    \dotnode(4,2){F}
    \dotnode(2,0){G}
    \dotnode(0,2){H}
    \ncline{A}{B}
    \ncline{A}{C}
    \ncline{A}{D}
    \ncline{B}{C}
    \ncline{B}{D}
    \ncline{C}{D}
    \ncline{A}{E}
    \ncline{B}{E}
    \ncline{B}{F}
    \ncline{C}{F}
    \ncline{C}{G}
    \ncline{D}{G}
    \ncline{D}{H}
    \ncline{A}{H}

    \rput(2,-0.5){(a)}
  \end{pspicture}
  \hspace{2cm}
  \begin{pspicture}(0,-0.5)(4,4)
    \dotnode(2,4){AB}
    \dotnode(4,2){BC}
    \dotnode(2,0){CD}
    \dotnode(0,2){AD}
    \dotnode(1.5,2.5){AC}
    \dotnode(2.5,1.5){BD}

    \ncline{AD}{AB}
    \ncline{AB}{BC}
    \ncline{BC}{CD}
    \ncline{CD}{AD}
    \ncline{AC}{AB}
    \ncline{AC}{AD}
    \ncline{AC}{CD}
    \ncline{AC}{BC}
    \ncline{BD}{AB}
    \ncline{BD}{AD}
    \ncline{BD}{CD}
    \ncline{BD}{BC}

    \rput(2,-0.5){(b)}
    
  \end{pspicture}

  \caption{(a) a chordal graph, which is not a line graph, and (b) the line
    graph of $K_4$, which is not chordal}
  \label{fig:chordal-line}
\end{figure}

\begin{figure}
  \centering
  \begin{pspicture}(2,1)
    \dotnode(0.2,0.2){A}
    \dotnode(1,0){B}
    \dotnode(1.8,0.2){C}
    \dotnode(1,1){D}
    \ncline{A}{D}
    \ncline{B}{D}
    \ncline{C}{D}
  \end{pspicture}
  \hspace{1cm}

  \caption{A claw}
  \label{fig:claw}
\end{figure}

It would be tempting to use this structure theory for claw free graphs, or at
least the simple treelike structure of chordal graphs, to prove that \IFPC\ captures \PTIME\ on these classes in a similar way as the structure theory for
classes of graphs with excluded minors is used to prove that \IFPC\ captures
\PTIME\ on classes with excluded minors. Unfortunately, this is only possible
on the very restricted class of graphs that are both chordal and line
graphs (an example of such a graph is shown in Figure~\ref{fig:cl} on p.\pageref{fig:cl}). We prove the following theorem:

\begin{samepage}
\begin{theo}~
  \begin{enumerate}
  \item
    \IFPC\ does not capture \PTIME\ on the class of chordal graphs or
    on the class of line graphs.
  \item \IFPC\ captures \PTIME\ on the class of chordal line graphs.
    \end{enumerate}
\end{theo}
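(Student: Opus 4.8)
The plan is to prove the two parts by very different methods: part~(1) via a pair of hardness reductions, and part~(2) via a definable-canonisation argument resting on the bounded-tree-width result of~\cite{gromar99}.

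For part~(1) I would show that capturing \PTIME\ on the class of chordal graphs, and likewise on the class of line graphs, is at least as hard as capturing it on all graphs; since \IFPC\ fails to capture \PTIME\ on all graphs by the Cai--F\"urer--Immerman construction~\cite{caifurimm92}, it then fails on each of the two subclasses. The reduction in each case is a first-order interpretation $\Gamma$ sending an arbitrary graph $G$ into the target class, together with a fixed first-order interpretation $\Theta$ recovering $G$, i.e.\ $\Theta(\Gamma(G))\cong G$. For chordal graphs, take $\Gamma(G)$ to be $V(G)$ made into a clique with, for each edge $e=\{u,v\}$ of $G$, a fresh vertex $x_e$ adjacent exactly to $u$ and $v$; a routine case check (any cycle through two clique vertices has the clique edge as a chord, and no long cycle can avoid having two clique vertices) shows $\Gamma(G)$ is chordal, and when $|V(G)|\ge 4$ the vertices $x_e$ are exactly the vertices of degree $2$, so $G$ is first-order recoverable. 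For line graphs, let $\Gamma(G)=L(H_G)$ where $H_G$ is obtained from $G$ by subdividing every edge once and attaching a pendant leaf at every original vertex; then $\Gamma(G)$ is a line graph by construction, the map is a first-order interpretation (its universe is a definable subset of $V(G)^2$, identifying the leaf-edge at $v$ with $(v,v)$ and the half of a subdivided edge $uv$ incident with $u$ with $(u,v)$), the simplicial vertices of $\Gamma(G)$ are exactly the $(v,v)$ and so biject with $V(G)$, and $u\sim_G v$ iff some neighbour of $(u,u)$ is adjacent to some neighbour of $(v,v)$. In both cases membership in the range of $\Gamma$ is polynomial-time decidable, so, given a hypothetical \IFPC\ formula capturing a \PTIME\ property $\CP$ on the target class, applying it to the (polynomial-time) class $\{\Gamma(G):G\in\CP\}$ and pulling it back along $\Gamma$ yields an \IFPC\ formula for $\CP$ on all graphs, contradicting Cai--F\"urer--Immerman.

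For part~(2) the first step is to understand the structure of a chordal line graph $G=L(H)$. If $H$ contained a cycle of length $k\ge 4$, its edge set would induce a chordless $k$-cycle in $L(H)$, contradicting chordality; hence every block of $H$ is $K_1$, $K_2$, or $K_3$, and in particular $H$ has tree width at most $2$. Conversely $L$ of any such graph is a chordal line graph. The plan is then to canonise $G$ by: (a) producing, by an \IFPC-interpretation $\Lambda$, a root graph $H=\Lambda(G)$ with $L(H)\cong G$ --- by Whitney's theorem such an $H$ is unique up to isomorphism once the components isomorphic to $K_3$ are treated separately, so this is a well-defined isomorphism-invariant operation, and because $H$ is a tree of triangles and edges all the relevant local configurations (in particular, telling a $K_3$-block apart from a triangle sitting inside the star of a degree-$3$ vertex) are bounded and can be resolved by a first-order-with-counting analysis of bounded neighbourhoods; (b) applying the \IFPC-definable canonisation of bounded-tree-width graphs from~\cite{gromar99} to $\Lambda(G)$, obtaining an ordered copy of $H$; and (c) applying the first-order-definable line-graph operation (with edges ordered by their endpoints) to obtain an ordered copy of $L(H)\cong G$. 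This is an \IFPC-definable canonisation of the class of chordal line graphs, and by the Immerman--Vardi theorem, applied to the ordered canonical structure and pulled back along the canonisation, \IFPC\ captures \PTIME\ on that class.

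I expect the main obstacle to be step~(a) of part~(2): while Whitney's theorem hands us a canonical root graph as an abstract object, turning the recovery into an actual \IFPC-interpretation demands an explicit description of the Krausz clique partition of $G$ purely in terms of adjacency in $G$ --- identifying the cliques that become vertices of $H$, resolving the triangle ambiguities, and handling cut vertices of small degree as well as disconnected $G$ --- together with a verification that this description is first-order-with-counting and not merely polynomial-time. By comparison, the delicate points in part~(1) (that $\Gamma(G)$ really lies in the target class, and that $G$ is uniformly recoverable, with the harmless hypothesis $|V(G)|\ge 4$ in the chordal case) are routine.
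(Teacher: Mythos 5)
Your part~(1) is essentially the paper's own argument. The chordal gadget (turn $V(G)$ into a clique and add one degree-two vertex per edge) is literally the graph $\hat G$ used in the paper, with the same recovery-by-degrees observation for $|G|\ge 4$ and the same pull-back along a first-order interpretation against the Cai--F\"urer--Immerman property $\CFI$. For line graphs the paper is more direct: it interprets $L(G)$ itself and gets injectivity and polynomial-time decidability of the image class from Whitney's theorem and Roussopoulos's reconstruction algorithm; your subdivision-plus-pendant gadget instead buys first-order recoverability, which is fine, but the polynomial-time decidability of $\{\Gamma(G):G\in\CFI\}$ (needed to invoke (G.1)$_{\CL}$) still requires a recognition argument for the image class --- you cannot simply test $H\cong\Gamma(\Theta(H))$, as that is an isomorphism test --- so you should either characterise the range structurally or cite line-graph recognition/reconstruction, exactly where the paper cites Roussopoulos.

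Part~(2) is a genuinely different route. The paper never reconstructs a root graph: it canonises a chordal line graph directly from a good tree decomposition into maximal cliques, proving that distinct bags meet in at most two vertices and that three-way intersections force triangle bags, replacing the clique tree by an \IFP-definable DAG on triples of vertices (separator plus component representative), and then running the inductive lexicographic construction familiar from trees and bounded tree width. Your plan --- define by an \IFPC-interpretation a root graph $H$ with $L(H)\cong G$, note that chordality of $L(H)$ forces all blocks of $H$ to be edges or triangles so that $\tw(H)\le 2$, canonise $H$ via the bounded-tree-width result of \cite{gromar99}, and push the order through the first-order line-graph operation --- is exactly the ``easier way using Proposition~\ref{prop:lc}'' that Remark~\ref{rem:gen} of the paper mentions but deliberately does not carry out, because the direct construction generalises beyond chordal line graphs. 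Your route is sound, but its entire weight rests on the step you yourself flag: defining the Krausz clique partition of $G$ canonically in \IFPC. That is doable (the even/odd triangle test is first-order and resolves the star-versus-triangle ambiguity outside finitely many bounded-size exceptional configurations and components, which can be special-cased; uniqueness of the partition on the remaining connected graphs is what makes the interpretation isomorphism-invariant; cliques must be represented as definable equivalence classes of pairs, with separate provision for pendant and isolated cases), but it is real work and should be written out, not just asserted. Two further small points to make explicit: composing your interpretation with the canonisation of \cite{gromar99} requires the standard simulation of the number sort of the interpreted structure inside $G^+$ (since $|V(H)|$ may exceed $|V(G)|$, use tuples of numbers), and the composed canonisation may carry parameters, which the paper's notion of definable canonisation explicitly permits.
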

\end{samepage}

Our construction to prove (1) is so simple that it will
apply to any reasonable logic, which means that if a ``reasonable''
logic captures \PTIME\ on the class of chordal graphs or on the class
of line graphs, then it captures \PTIME\ on the class of all
graphs. 

Further interesting graph classes closed under taking induced subgraphs
are various classes of intersection graphs. Very recently,
Laubner~\cite{lau10} proved that \IFPC\ captures \PTIME\ on the
class of all interval graphs. To conclude our discussion of classes of graphs on which \IFPC\ captures
\PTIME, let me mention a result due to Hella, Kolaitis, and
Luosto~\cite{helkolluo96} stating that \IFPC\ captures \PTIME\ on almost all
graphs (in a precise technical sense). Thus it seems that the results for
specific classes of graphs are not very surprising, but it should be
mentioned that almost no graphs fall in one of the natural graphs
classes discussed before.

Instead of capturing all \PTIME\ on a specific class of structures,
Otto~\cite{ott97a,ott97c,ott99} studied the question of capturing all \PTIME\
properties satisfying certain invariance conditions. Most notably, he proved
that bisimulation-invariant properties are decidable in polynomial time
if and only if they are definable in the \emph{higher-dimensional
  $\mu$-calculus}.

\subsection{Isomorphism testing and canonisation}

As an abstract question, the question of whether there is a logic capturing polynomial time is
linked to the graph isomorphism and canonisation problems. Otto~\cite{ott97a} was the first to systematically study the connection
between canonisation and descriptive complexity theory. Specifically, if
there is a polynomial time canonisation algorithm for a class
$\CC$ of graphs, then there is a logic that captures
polynomial time on this class $\CC$. This follows from the
Immerman-Vardi Theorem. To explain it, let us assume
that we represent graphs by their adjacency matrices. A \emph{canonisation
  mapping} gets as argument some adjacency matrix representing a graph and
returns a \emph{canonical} adjacency matrix for this graph, that is, it maps
\emph{isomorphic} adjacency matrices to \emph{equal} adjacency matrices. 
 As an
adjacency matrix for a graph is completely fixed once we specify the ordering
of the rows and columns of the matrix, we may view a canonisation as a mapping
associating with each graph a canonical ordered copy of the graph. Now we can
apply the Immerman-Vardi Theorem to this ordered copy.

Clearly, if there is a polynomial time canonisation mapping for a class of
graphs (or other structures) then there is a polynomial time isomorphism test
for this class. It is open whether the converse also holds. It is also open
whether the existence of a logic for polynomial time implies the existence of
a polynomial time isomorphism test or canonisation mapping.

Polynomial time canonisation mappings are known for many natural classes of
graphs, for example planar graphs \cite{hoptar72,hopwon74}, graphs of bounded
genus~\cite{filmay80,mil80}, graphs of bounded eigenvalue
multiplicity~\cite{babgrimou82},
graphs of bounded
degree~\cite{babluk83,luk82}, and graphs of bounded tree width~\cite{bod90}.
Hence for all theses classes there are logics capturing \PTIME.
However, the logics obtained through canonisation hardly qualify as natural
logics. If a logic is to contribute to our understanding of the complexity
class \PTIME --- and from my perspective this is the main reason for being
interested in such a logic --- we have to look for natural logics that derive
their expressiveness from clearly visible basic principles like inductive
definability, counting or other combinatorial operations, and maybe fundamental
algebraic operations like computing the rank or the determinant of a
matrix. If such a logic captures polynomial time on a class of structures,
then this shows that all polynomial time properties of structures in this
class are based on the principles underlying the logic. Thus even for classes
for which we know that there is a logic capturing \PTIME\ through a
polynomial-time canonisation algorithm, I think it is important to find
``natural'' logics capturing \PTIME\ on these classes. In particular,
I view it as an
important open problem to find a natural logic that captures \PTIME\ on
classes of graphs of bounded degree. It is known that \IFPC\ does not capture
\PTIME\ on the class of all graphs of maximum degree at most three.

Most known capturing results are proved by showing that there is a
canonisation mapping that is definable in some logic. In particular, all
capturing results for \IFPC\ mentioned above are proved this way. 
It was
observed by Cai, Fürer, and Immerman \cite{caifurimm92} that for classes
$\CC$ of structures which admit a canonisation mapping definable in
\IFPC, a simple combinatorial algorithm known as the Weisfeiler-Lehman (WL)
algorithm \cite{evdkarpon99,evdpon99} can be used as a polynomial time
isomorphism test on $\mathcal C$. Thus the the WL-algorithm correctly decides
isomorphism on the class of chordal line graphs and on all classes of
graphs with excluded minors. A refined version of the same approach was used by
Verbitsky and others \cite{grover06,kobver08,ver07} to obtain parallel
isomorphism tests running in polylogarithmic time for planar graphs and graphs
of bounded tree width.

\subsection{Stronger logics}
Early on, a number of results regarding the possibility of capturing
polynomial time by adding Lindström quantifiers to first-order logic or
fixed-point logic were obtained. Hella \cite{hel89} proved that adding
finitely many Lindström quantifiers (or infinitely many of bounded arity) to
fixed-point logic does not suffice to capture polynomial time (also see
\cite{dawhel94}). Dawar~\cite{daw95a} proved that if there is a logic
capturing polynomial time, then there is such a logic obtained from
fixed-point logic by adding one vectorised family of Lindström quantifiers.
Another family of logics that have been studied in this context consists of extensions
of fixed-point logic with
nondeterministic choice operators \cite{abisimvia90,dawric03,girhoa98}.

Currently, the two main candidates for logics capturing $\PTIME$
are \emph{choiceless polynomial time with counting $\CPC$} and
\emph{inflationary fixed-point logic with a rank operator $\IFPR$}.  The logic $\CPC$ was
introduced by Blass, Gurevich and Shelah~\cite{blagurshe99}
(also see~\cite{blagurshe02,dawricros06}). The formal definition of the logic
is carried out in the framework of \emph{abstract state machines} (see, for
example, \cite{gur00}). Intuitively
$\CPC$ may be viewed as a version of $\IFPC$ where quantification and fixed-point
operators not only range over elements of a structure, but instead over all
objects that can be described by $O(\log n)$ bits, where $n$ is the size of
the structure. This intuition can be formalised in an expansion of a structure
by all hereditarily finite sets which use the elements of the structure as
atoms. The logic $\IFPR$ \cite{dawgrohollau09} is an
extension of $\IFP$ by an operator that determines the rank of definable
matrices in a structure. This may be viewed as a higher
dimensional version of a counting operator. (Counting appears as a special
case of diagonal $\{0,1\}$-matrices.)

Both $\CPC$ and $\IFPR$ are known to be strictly more expressive than $\IFPC$. Indeed, both logics can express the property used by Cai, Fürer, and Immerman
to separate $\IFPC$ from $\PTIME$. For both logics it is open whether they
capture polynomial time, and it is also open whether one of them semantically contains
the other.

\section{Preliminaries}

$\NN$, and $\PN$ denote the sets
of nonnegative integers and natural numbers (that is, positive integers),
respectively. 
For $m,n\in\NN$, we let $[m,n]:=\{\ell\in\NN\mid m\le\ell\le n\}$ and
$[n]:=[1,n]$.  We denote the power set of a set $S$ by $2^S$ and the
set of all $k$-element subsets of $S$ by $\binom{S}{k}$. 

We often denote tuples $(v_1,\ldots,v_k)$ by $\vec v$. If $\vec v$
denotes the tuple $(v_1,\ldots,v_k)$, then by $\tilde v$ we denote the
set $\{v_1,\ldots,v_k\}$. If $\vec v=(v_1,\ldots,v_k)$ and $\vec
w=(w_1,\ldots,w_\ell)$, then by $\vec v\vec w$ we denote the tuple
$(v_1,\ldots,v_k,w_1,\ldots,w_\ell)$. By $|\vec v|$ we denote the length of a
tuple $\vec v$, that is, $|(v_1,\ldots,v_k)|=k$.

\subsection{Graphs}
\label{sec:gra}
Graphs in this paper are always finite, nonempty, and simple, where simple
means that there are no loops or parallel edges. Unless explicitly called
``directed'', graphs are undirected. The vertex set of a graph $G$ is denoted
by $V(G)$ and the edge set by $E(G)$. We view graphs as relational structures
with $E(G)$ being a binary relation on $V(G)$. However, we often find it
convenient to view edges (of undirected graphs) as 2-element subsets of $V(G)$
and use notations like $e=\{u,v\}$ and $v\in e$. Subgraphs, induced subgraphs, union, and intersection of
graphs are defined in the usual way. We write $G[W]$ to denote the induced
subgraph of $G$ with vertex set $W\subseteq V(G)$, and we write $G\setminus W$
to denote $G[V(G)\setminus W]$. The set $\{w\in V(G)\mid \{v,w\}\in
E(G)\}$ of \emph{neighbours} of a node $v$ is denoted by $N^G(v)$, or just $N(v)$ if $G$ is
clear from the context, and the \emph{degree} of $v$ is the
cardinality of $N(v)$. The \emph{order} of a
graph, denoted by $|G|$, is the number of vertices of $G$. The class
of all graphs is denoted by $\CG$. A \emph{homomorphism} from a graph
$G$ to a graph $H$ is a mapping $h:V(G)\to V(H)$ that preserves
adjacency, and an \emph{isomorphism} is a bijective homomorphism whose
inverse is also a homomorphism.

For every finite nonempty set $V$, we let
$K[V]$ be the \emph{complete graph} with vertex set $V$, and we let
$K_n:=K\big[[n]\big]$. A \emph{clique} in a graph $G$ is a set $W\subseteq V(G)$
such that $G[W]$ is a complete graph.
\emph{Paths} and \emph{cycles} in graphs are defined in the usual way. The
\emph{length} of a path or cycle is the number of its edges.
\emph{Connectedness}
and \emph{connected} components are defined in the usual way. A set
$W\subseteq V(G)$ is \emph{connected} in a graph $G$ if $W\neq\emptyset$ and
$G[W]$ is connected. For sets
$W_1,W_2\subseteq V(G)$, a set $S\subset V(G)$ \emph{separates $W_1$ from
  $W_2$} if there is no path from a
vertex in $W_1\setminus S$ to vertex in $W_2\setminus S$ in the graph
$G\setminus S$.  
 
A \emph{forest} is an undirected acyclic
graph, and a \emph{tree} is a connected forest.
It will be a useful convention to call the vertices of trees and forests
\emph{nodes}. 
A \emph{rooted tree} is a triple
$T=(V(T),E(T),r(T))$, where $(V(T),E(T))$ is a tree and $r(T)\in V(T)$ is a
distinguished node called the \emph{root}.

We occasionally have to deal with \emph{directed graphs}.  We allow
directed graphs to have loops. We use standard graph theoretic
terminology for directed graphs, without going through it in
detail. Homomorphisms and isomorphisms of directed graphs preserve the
direction of the edges. Paths and cycles in a directed graph are always meant to be
directed; otherwise we will call them ``paths or cycles of the
underlying undirected graph''. Note that cycles in directed graphs may
have length $1$ or $2$. For a directed graph $D$ and a vertex $v\in
V(D)$, we let $N^D(v):=\big\{w\in V(D)\bigmid (v,w)\in
E(D)\big\}$. \emph{Directed acyclic graphs} will be of particular
importance in this paper, and we introduce some additional terminology
for them: Let $D$ be a directed acyclic graph. A node $w$ is a
\emph{child} of a node $v$, and $v$ is a \emph{parent} of $w$, if
$(v,w)\in E(D)$. We let $\dagle^D$ be the reflexive transitive closure
of the edge relation $E(D)$ and $\dagsle^D$ its irreflexive
version. Then $\dagle^D$ is a partial order on $V(D)$. 

A \emph{directed tree} is a directed acyclic graph $T$ in which every node
has at most one parent, and for which there is a vertex $r$ called
the \emph{root} such that for all $t\in V(t)$ there is a path from $r$
to $t$. There is an obvious one-to-one correspondence between rooted
trees and directed trees: For a rooted tree $T$ with root $r:=r(T)$ we
define the corresponding directed tree $T'$ by $V(T'):=V(T)$ and
$E(T'):=\big\{(t,u)\bigmid \{t,u\}\in E(T)$ and $t$ occurs on the
  path $rTu\big\}$. We freely jump back and forth between rooted
trees and directed trees, depending on which will be
more convenient. In particular, we use the terminology introduced for
directed acyclic graphs (parents, children, the partial order $\dagle$, et cetera) for
rooted trees.

\subsection{Relational structures}

A \emph{relational structure} $A$ consists of a finite set $V(A)$ called the
\emph{universe} or \emph{vertex set} of $A$ and finitely many relations on
$A$. The only types of structures we will use in this paper are \emph{graphs},
viewed as structures $G=\big(V(G),E(G)\big)$ with one binary relation $E(G)$, and
\emph{ordered graphs}, viewed as structures $G=\big(V(G),E(G),\les(G)\big)$ with two binary relations $E(G)$
and $\les(G)$, where $\big(V(G),E(G)\big)$ is a graph and $\les(G)$ is a
linear order of the vertex set $V(G)$.

\subsection{Logics}

We assume that the reader has a basic knowledge in logic. In this section, we
will informally introduce the two main logics \IFP\ and \IFPC\ used in this
paper. For background and a precise definition, I refer the reader to one of the
textbooks~\cite{ebbflu99,gklmsvvw07,imm99,lib04}.  It will be convenient to
start by briefly reviewing \emph{first-order logic} $\FOL$. Formulae of
first-order logic in the language of graphs are built from atomic formulae
$E(x,y)$ and $x=y$, expressing adjacency and equality of vertices, by the usual
Boolean connectives and existential and universal quantifiers ranging over the
vertices of a graph. First-order formulae in the language of ordered graphs
may also contain atomic formulae of the form $x\les y$ with the obvious
meaning, and formulae in other languages may contain atomic formulae defined
for these languages. We
write $\phi(x_1,\ldots,x_k)$ to denote that the free variables of a formula
$\phi$ are among $x_1,\ldots,x_k$. For a graph $G$ and vertices
$v_1,\ldots,v_k$, we write $G\models\phi[v_1,\ldots,v_k]$ to denote that $G$
satisfies $\phi$ if $x_i$ is interpreted by $v_i$, for all $i\in[k]$. 

\emph{Inflationary fixed-point logic} \IFP\ is the extension of \FOL\ by a
fixed-point operator with an inflationary semantics. To introduce this operator,
let $\phi(X,\vec x)$ be a formula that, besides a $k$-tuple $\vec
x=(x_1,\ldots,x_k)$ of free \emph{individual
  variables} ranging over the vertices of a graph, has a free
$k$-ary \emph{relation variable} ranging over $k$-ary relations on the vertex
set. For every graph $G$ we
define a sequence $R_i=R_i(G,\phi,X,\vec x)$, for
$i\in\NN$, of $k$-ary relations on $V(G)$ as follows:
\begin{align*}
  R_0&:=\emptyset\\
  R_{i+1}&:=R_i\cup\big\{\vec v\bigmid G\models\phi[R_i,\vec v]\big\}&\text{for all }i\in\NN.
\end{align*}
Since we have $R_0\subseteq R_1\subseteq R_2\subseteq\cdots\subseteq V(G)^k$
and $V(G)$ is finite, the sequence reaches a fixed-point $R_n=R_{n+1}=R_i$ for
all $i\ge n$, which we denote by $R_\infty=R_\infty(G,\phi,X,\vec x)$. The
\emph{ifp-operator} applied to $\phi,X,\vec x$ defines this fixed-point. We
use the following syntax:
\begin{equation}
  \label{eq:ifp-formula}
\underbrace{\ifp\big(X\gets\vec x\bigmid \phi\big)\vec x'}_{=:\psi(\vec x')}.
\end{equation}
Here $\vec x'$ is another $k$-tuple of individual variables, which may
coincide with $\vec x$. The variables in the tuple $\vec x'$ are the free
variables of the formula $\psi(\vec x')$, and for every graph $G$ and every
tuple $\vec v\in V(G)^k$ of vertices we let $G\models\psi[\vec v]\iff\vec v\in
R_\infty$. These definitions can easily be
extended to a situation where the formula $\phi$ contains other free variables
than $X$ and and the variables in $\tilde x$; these variables remain free variables of
$\psi$. Now formulae of
inflationary fixed-point logic \IFP\ in the language of graphs are built from atomic formulae
$E(x,y)$, $x=y$, and $X\vec x$ for relation variables $X$ and tuples of
individual variables $\vec x$ whose length matches the arity of $X$, by the usual
Boolean connectives and existential and universal quantifiers ranging over the
vertices of a graph, and the ifp-operator.

\begin{exa}\label{exa:conn}
  The \IFP-sentence 
  \[
  \formel{conn}:=\forall x_1\forall x_2\;\ifp\Big(X\gets (x_1,x_2)\Bigmid
       x_1=x_2
         \vee E(x_1,x_2)
         \vee \exists x_3\big(X(x_1,x_3)\wedge X(x_3,x_2)\big)\Big)
       (x_1,x_2)
       \]
       states that a graph is connected.  
\end{exa}

\emph{Inflationary fixed-point logic with counting}, \IFPC, is the extension  of
\IFP\ by counting operators that allow it to speak about cardinalities of
definable sets and relations. To define \IFPC, we interpret the logic \IFP\
over two sorted extensions of graphs (or other relational structures) by a
numerical sort. For a graph $G$, we let $N(G)$ be the initial segment
$\big[0,|G|\big]$ of the nonnegative integers. We let $G^+$ be the two-sorted
structure $G\cup
(N(G),\le)$, where $\le$ is the natural linear order on $N(G)$. To avoid
confusion, we always assume that $V(G)$ and $N(G)$ are disjoint.
We call the elements of the first sort $V(G)$ \emph{vertices} and
the elements of the second sort $N(G)$ \emph{numbers}. Individual variables
of our logic range either over the set $V(G)$ of vertices of $G$ or over the set $N(G)$ of numbers of $G$. Relation variables may range over mixed relations, having
certain places for vertices and certain places for numbers. 
Let us call the resulting logic, inflationary fixed-point logic
over the two-sorted extensions of graphs, $\IFP^+$.  We may still view
$\IFP^+$ as a logic over plain graphs, because the
extension $G^+$ is uniquely determined by
$G$. More precisely, we say that a sentence $\phi$ of $\IFP^+$ is satisfied by a graph $G$ if
it $G^+\models\phi$.
\emph{Inflationary fixed-point logic with counting} \IFPC\ is the
extension of $\IFP^+$ by \emph{counting terms} formed as follows: For every
formula $\phi$ and every vertex variable $x$ we add a term $\#
x\;\phi$; the value of this term is the number of assignments to $x$
such that $\phi$ is satisfied.

With each \IFPC-sentence $\phi$ in the language of graphs we associate the
graph property $\CP_{\phi}:=\{G\mid G\models\phi\}$. As the set of all
\IFPC-sentences is computable, we may thus view \IFPC\ as an abstract logic
according to the definition given in Section~\ref{sec:lcp}. It is easy to see
that \IFPC\ satisfies condition \ref{li:c2} and therefore condition (G.2)$_{\CC}$
for every class $\CC$ of graphs. Thus to prove that \IFPC\ captures \PTIME\ on
a class $\CC$ it suffices to verify (G.1)$_{\CC}$.

In the following examples, we use the notational convention that $x$ and
variants such as $x_1,x'$ denote vertex variables and that $y$ and variants
denote number variables.

\begin{exa}\label{exa:even}
  The $\IFPC$-term $\formel0:=\# x\;\neg x=x$
  defines the number $0\in
  N(G)$.
The formula 
\[
\formel{succ}(y_1,y_2):=y_1\le y_2\wedge\neg y_1=y_2\wedge\forall
y(y\le y_1\vee y_2\le y)
\]
defines the
successor relation associated with the linear order $\le$.
  The following $\IFPC$-formula defines the set of even
  numbers in $N(G)$:
  \[
  \formel{even}(y):=\ifp\Big(Y\gets y\Bigmid y=\formel0\vee\exists y'\exists
  y''\big(Y(y')\wedge\formel{succ}(y',y'')\wedge\formel{succ}(y'',y)\big)\Big)y.
  \]
\end{exa}

\begin{exa}
  An \emph{Eulerian cycle} in a graph is a closed walk on which every edge
  occurs exactly once. A graph is \emph{Eulerian} if it has a Eulerian
  cycle. It is a well-known fact that a graph is Eulerian if and only if it is
  connected and every vertex has even degree.
  Then the following \IFPC-sentence defines the class of Eulerian graphs:
  \[
     \formel{eulerian}:=\formel{conn}\wedge\forall x_1\;\formel{even}\big(\# x_2\;E(x_1,x_2)\big),
  \]
  where $\formel{conn}$ is the sentence from Example~\ref{exa:conn} and
  $\formel{even}(y)$ is the formula from Example~\ref{exa:even}.
  By standard techniques from finite model theory, it can be proved that the
  class of Eulerian graphs is neither definable in \IFP\ nor in the counting
  extension \FOC\ of first-order logic.
\end{exa}

\subsection{Syntactical interpretations}

In the following, $\LL$ is one of the logics \IFPC, \IFP, or \FOL, and
$\lambda,\mu$ are relational languages such as the language $\{E\}$ of graphs
or the language $\{E,\les\}$ of ordered graphs. An
\emph{$\LL[\lambda]$-formula} is an $\LL$-formula in the language $\lambda$,
and similarly for $\mu$.  We need some
additional notation:
\begin{itemize}
\item  Let $\approx$ be an equivalence relation on a set
$U$. For every $u\in U$, by $u/_\approx$ we denote the $\approx$-equivalence
class of $u$, and we let $U/_\approx:=\{u/_\approx\mid u\in U\}$ be the set of
all equivalence classes. For a tuple $\vec u=(u_1,\ldots,u_k)\in U^k$ we let
$\vec u/_\approx:=(u_1/_\approx,\ldots,u_k/_\approx)$, and for a relation
$R\subseteq U^k$ we let $R/_\approx:=\{\vec u/_\approx\mid\vec u\in
R\}$. 
\item Two tuples $\bar x=(x_1,\ldots,x_k),(y_1,\ldots,y_\ell)$ of individual
  variables have the same \emph{type} if $k=\ell$ and for all $i\in[k]$ either
  both $x_i$ and $y_i$ range over vertices or both $x_i$ and $y_i$ range over
  numbers. For every structure $G$, we let $G^{\vec x}$ be the set of all tuples
  $\vec a\in(V(G)\cup N(G))^k$ such that for all $i\in[k]$ we have $a_i\in
  V(G)$ if $x_i$ is a vertex variable and $a_i\in N(G)$ if $x_i$ is a number
  variable.
\end{itemize}

\begin{defn}\label{def:si}
  \begin{enumerate}
  \item An \emph{\LL-interpretation of $\mu$ in $\lambda$} is a
    tuple
    \[
    \Gamma(\vec x)=\Big(\gapp(\vec x),\gamma_V(\vec
    x,\vec y),\gamma_\approx(\vec x,\vec y_1,\vec y_2),\big(\gamma_R(\vec
    x,\vec y_R)\big)_{R\in\mu}\Big),
    \]
    of $\LL[\lambda]$-formulae, where $\vec x$, $\vec y$, $\vec y_1$, $\vec
    y_2$, and
    $\vec y_R$ for $R\in\mu$ are tuples of individual variables such that
    $\vec y,\vec y_1,\vec y_2$ all have the same type, and for every $k$-ary
    $R\in\mu$ the tuple $\vec y_R$ can be written as $\vec y_{R1}\ldots\vec
    y_{R,k}$, where the $\vec y_{R,i}$ have the same type as $\vec y$.
  \end{enumerate}

  \medskip\noindent
  In the following, let $\Gamma(\vec
  x)$ be an
  \LL-interpretation of $\mu$ in $\lambda$. Let $G$ be a $\lambda$-structure
  and $\vec a\in G^{\vec x}$:
  \begin{enumerate}
    \setcounter{enumi}{2}
  \item $\Gamma(\vec x)$ is \emph{applicable} to $(G,\vec a)$ if
    $G\models\gapp[\vec a]$.
  \item If $\Gamma(\vec x)$ is applicable to $(G,\vec a)$, we let
    $\Gamma[G;\vec a]$ be the $\mu$-structure with vertex set 
    \[
    V\big(\Gamma[G;\vec a]\big):=\big\{\vec b\in G^{\vec y}\bigmid G\models\gamma_V[\vec a,\vec b]\big\}\big/_\approx,
    \]
    where $\approx$ is the reflexive, symmetric, transitive closure of
    $\big\{(\vec b_1,\vec b_2)\in (G^{\vec y})^2\bigmid G\models\gamma_\approx[\vec a,\vec
    b_1,\vec b_2]\big\}$. Furthermore, for $k$-ary $R\in\mu$, we let
    \[
    R\big(\Gamma[G;\vec a]\big):=\Big\{(\vec b_1,\ldots,\vec b_k)\in
    V\big(\Gamma[G;\vec a]\big)\Bigmid G\models\gamma_R[\vec a,\vec
    b_1,\ldots,\vec b_k]\Big\}\Big/_{\approx}.
    \] 
  \end{enumerate}
\end{defn}

Syntactical interpretations map $\lambda$-structures to $\mu$-structures.
The crucial observation is that they also induce a reverse translation
from $\LL[\mu]$-formulae to $\LL[\lambda]$-formulae. 

\begin{fact}[Lemma on Syntactical Interpretations]\label{fact:si}
  Let $\Gamma(\vec x)$ be an $\LL$-interpretation of $\mu$ in
  $\lambda$.  Then for every $\LL[\mu]$-sentence
  $\phi$ there is an $\LL[\lambda]$-formula
  $\phi^{-\Gamma}(\vec x)$ such
  that the following holds for all $\lambda$-structures $G$ and all tuples
  $\vec a\in G^{\vec x}$: If\/
  $\Gamma(\vec x)$ is applicable to $(G,\vec a)$, then
   \[
   G\models\phi^{-\Gamma}[\vec a]\iff\Gamma[G;\vec a]\models\phi.
   \]
\end{fact}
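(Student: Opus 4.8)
The plan is to prove the Lemma on Syntactical Interpretations by induction on the structure of the $\LL[\mu]$-formula $\phi$, allowing $\phi$ to have free variables throughout the induction (so the statement for sentences is just the special case with no free variables). The key bookkeeping device is that each $\mu$-variable $z$ ranging over $V(\Gamma[G;\vec a])$ must be simulated by a tuple $\vec y_z$ of $\lambda$-variables of the same type as $\vec y$, representing one element of the $\approx$-class. So for an $\LL[\mu]$-formula $\psi(z_1,\ldots,z_m)$ I would construct an $\LL[\lambda]$-formula $\psi^{-\Gamma}(\vec x,\vec y_{z_1},\ldots,\vec y_{z_m})$ with the property that whenever $\Gamma(\vec x)$ is applicable to $(G,\vec a)$ and $\vec b_1,\ldots,\vec b_m$ are tuples in $G^{\vec y}$ satisfying $\gamma_V$, we have $G\models\psi^{-\Gamma}[\vec a,\vec b_1,\ldots,\vec b_m]$ if and only if $\Gamma[G;\vec a]\models\psi[\vec b_1/_\approx,\ldots,\vec b_m/_\approx]$.

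First I would handle the atomic formulae. For $z_i=z_j$ in $\mu$ I use $\gamma_\approx(\vec x,\vec y_{z_i},\vec y_{z_j})$ (strictly, its reflexive-symmetric-transitive closure, which is itself definable in $\LL$ since $\LL\supseteq\IFP$ — I would note that an equivalence-closure is expressible by an $\ifp$-operator, exactly as in Example~\ref{exa:conn}); for an atom $R z_{i_1}\cdots z_{i_k}$ I use $\gamma_R(\vec x,\vec y_{z_{i_1}},\ldots,\vec y_{z_{i_k}})$; and for atoms involving a relation variable $Z$ I introduce a fresh $\LL[\lambda]$-relation variable $Z^\Gamma$ of the appropriately multiplied arity and substitute the $\vec y$-tuples. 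Boolean connectives translate directly. For an existential quantifier $\exists z\,\psi$, the translation is $\exists \vec y_z\,\big(\gamma_V(\vec x,\vec y_z)\wedge\psi^{-\Gamma}\big)$, relativising the new vertex-tuple to the domain formula; universal quantifiers are dual. The only genuinely delicate case is the $\ifp$-operator: given $\ifp(Z\gets\vec z\mid \chi)\vec z'$ in $\mu$, where $\vec z$ is an $\ell$-tuple of $\mu$-variables, I replace it by an $\ifp$-operator over the relation variable $Z^\Gamma$ of arity $\ell\cdot|\vec y|$, with body $\gamma_V(\vec x,\vec y_{z_1})\wedge\cdots\wedge\gamma_V(\vec x,\vec y_{z_\ell})\wedge\chi^{-\Gamma}$, applied to the corresponding concatenation of $\vec y$-tuples for $\vec z'$. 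The point here is that the inflationary stages of the translated operator, restricted to tuples all satisfying $\gamma_V$, are exactly $\approx$-saturated preimages of the inflationary stages of the original operator over $\Gamma[G;\vec a]$ — this uses that $\chi^{-\Gamma}$ is already $\approx$-invariant in each of its $\vec y$-arguments (which follows from the induction hypothesis, since its value only depends on the $\approx$-classes of its arguments), so that the preimage of each stage stays a union of $\approx$-classes and the fixed point is reached simultaneously.

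If $\LL=\IFPC$ there is one more family of term-constructors to dispatch, namely the counting terms $\#z\,\psi$. Here one must be slightly careful: the value of $\#z\,\psi$ in $\Gamma[G;\vec a]$ counts $\approx$-classes, not $\vec y$-tuples, so the naive translation $\#\vec y_z\,\psi^{-\Gamma}$ overcounts. The fix is to count only one representative per class, e.g. by intersecting with a definable "is lex-least in its $\approx$-class" condition — but since this is only needed for the mixed-sort machinery and $N(G)$ carries a linear order, a cleaner route is to observe that the numerical sort is rigid and handle number-variables directly, while for vertex-tuples one uses the fact (available in $\IFPC$, since $\IFPC$ can define orderings of definable equivalence classes up to the counting barrier only in restricted cases) — actually the honest thing is to count representatives, so I would write $\#\vec y_z\,\big(\psi^{-\Gamma}\wedge\text{"$\vec y_z$ is the $\le^{\vec x}$-minimal element of its }\approx\text{-class"}\big)$, where a suitable pre-order on $G^{\vec y}$ is induced from the order on numbers and an $\ifp$-definable order on vertices is not needed because the counting operator only needs the cardinality. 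I expect the bookkeeping around multi-sorted tuple types and the arity multiplication for relation variables to be the main source of tedium, and the correctness proof for the $\ifp$-case to be the main conceptual obstacle, precisely because one must verify stage-by-stage that the inflationary iteration commutes with passing to $\approx$-classes.

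Finally, with the general claim established for formulae with free $\mu$-variables, specialising to an $\LL[\mu]$-sentence $\phi$ (no free variables, hence no $\vec y_{z}$-tuples) yields the formula $\phi^{-\Gamma}(\vec x)$ and the stated equivalence $G\models\phi^{-\Gamma}[\vec a]\iff\Gamma[G;\vec a]\models\phi$ whenever $\Gamma(\vec x)$ is applicable to $(G,\vec a)$, which is exactly what Fact~\ref{fact:si} asserts.
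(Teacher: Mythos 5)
The paper itself does not prove this Fact: it cites Ebbinghaus--Flum--Thomas for the first-order case and asserts that the adaptation to \IFP\ and \IFPC\ is easy, so your induction is the intended folklore argument, and its skeleton is right (one tuple $\vec y_z$ of the type of $\vec y$ per interpreted variable, relativisation of quantifiers to $\gamma_V$, arity multiplication for relation variables, and the stage-by-stage $\approx$-invariance argument for the ifp-operator). Two smaller slips: since Definition~2.6 does not require $\gamma_R$ to be invariant under $\approx$ (the relation of $\Gamma[G;\vec a]$ is defined as the quotient of the set of witnessing tuples), the translation of an atom $Rz_1\cdots z_k$ must existentially quantify over representatives $\approx$-equivalent to $\vec y_{z_1},\ldots,\vec y_{z_k}$ rather than substituting these tuples directly into $\gamma_R$; and your use of an $\ifp$-definable equivalence closure for equality quietly excludes the case $\LL=\FOL$, which the statement formally allows (there one must assume $\gamma_\approx$ already defines a congruence, as in the cited source).

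The genuine gap is the counting case. The fix you propose --- count only the ``minimal'' representative of each $\approx$-class --- is not available: a pre-order induced from the number sort alone cannot distinguish tuples that differ only in vertex positions, so ``the minimal element of its class'' is not unique and the term still overcounts, and an \IFPC-definable linear order on vertex tuples is precisely what does not exist in general (this is the point of Example~2.8 and of the Cai--F\"urer--Immerman theorem). The standard repair is different: the number of $\approx$-classes meeting $\gamma_V\wedge\psi^{-\Gamma}$ equals $\sum_k n_k/k$, where $n_k$ is the number of such tuples whose class has size exactly $k$; class sizes and the values $n_k$ come from counting terms with the tuple as parameter, and the summation and division are performed with \IFP-definable arithmetic on the number sort, large values being encoded by tuples of number variables. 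This also exposes an omission in your sketch: the two-sorted extension of $\Gamma[G;\vec a]$ has number sort $\big[0,|\Gamma[G;\vec a]|\big]$, which may be much larger than $N(G)$, so number variables, the linear order, and the values of counting terms of the interpreted structure must themselves be simulated by tuples over $N(G)$ with definable order and arithmetic; without this, even plain quantification over numbers in $\phi$ has no translation, let alone the counting terms.
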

\noindent
A proof of this fact for first-order logic can be
found in \cite{ebbflutho94}. The proof for the other logics considered
here is an easy adaptation of the one for first-order logic.

\subsection{Definable canonisation}

A \emph{canonisation mapping} for a class of $\CC$ graphs associates with
every graph $G\in\CC$ an \emph{ordered copy} of $G$, that is, an ordered graph $(H,\le)$ such that $H\cong G$. We are
interested in canonisation mappings definable in the logic \IFPC\ by
syntactical interpretations of $\{E,\les\}$ in $\{E\}$. The easiest way to
define a canonisation mapping is by defining a linear order $\le$ on the universe of a
structure $G$ and then take $(G,\le)$ as the canonical copy. However, defining
an ordered copy of a structure is not the same as defining a linear order on
the universe, as
the following example illustrates:

\begin{exa}\label{exa:complete}
  Let $\CK$ be the class of all complete graphs. It is easy to see that there
  is no \IFPC-formula $\phi(x_1,x_2)$ such that for all $K\in\CK$ the binary
  relation $\phi[K;x_1,x_2]$ is a linear order of $V(K)$.

  However, there is an \FOC-definable canonisation
  mapping for the class $\CK$: Let \[\Gamma=\big(\gapp,\gamma_V(\vec
  y),\gamma_\approx(y_1,y_2),\gamma_E(y_1,y_2),\gamma_{\les}(y_1,y_2)\big)\] be the numerical
  $\FOC$-interpretation of $\{E,\les\}$ in $\{E\}$ defined by:
  \begin{itemize}
  \item $\gapp:=\forall x\; x=x$;
  \item $\gamma_V(y):= 1\le y\wedge y\le\formel{ord}$, where
    $\formel{ord}:=\#x\;x=x$;
  \item $\gamma_\approx(y_1,y_2):= y_1=y_2$;
  \item $\gamma_E(y_1,y_2):= \neg y_1=y_2$;
  \item $\gamma_\les(y_1,y_2):= y_1\le y_2$.
  \end{itemize}
  It is easy to see that the mapping $K\mapsto\Gamma[K]$ is a canonisation
  mapping for the class $\CK$.
\end{exa}

Our notion of \emph{definable canonisation} slightly relaxes the requirement
of defining a canonisation mapping; instead of just one ordered copy, we
associate with each structure a parametrised family of polynomially many
ordered copies.

\begin{defn}
  \begin{enumerate}
  \item   Let $\Gamma(\vec x)$ be an \LL-interpretation of $\{E,\les\}$ in $\{E\}$. 
Then $\Gamma(\vec x)$ \emph{canonises} a graph $G$ if there is
    at least one tuple $\vec a\in G^{\vec x}$ such that
    $\Gamma(\vec x)$ is applicable to $(G,\vec a)$, and for all tuples
    $\vec a\in G^{\vec x}$ such that $\Gamma(\vec x)$ is
    applicable to $(G,\vec a)$ it holds that $\Gamma[G;\vec a]$ is an ordered
    copy of $G$.
  \item A class $\CC$ of graphs admits \emph{\LL-definable canonisation} if there is an
    \LL-interpretation $\Gamma(\vec x)$ of $\{E,\les\}$ in $\{E\}$ that
    canonises all $G\in\CC$.
  \end{enumerate}
\end{defn}

The following well-known fact is a consequence of the Immerman-Vardi
Theorem. It is used, at least implicitly, in
\cite{gro98a,gro08a,gromar99,immlan90,ott97a}:

\begin{fact}
  Let $\CC$ be a class of graphs that admits \IFPC-definable canonisation. Then \IFPC\
  captures \PTIME\ on $\CC$.
\end{fact}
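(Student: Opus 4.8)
The plan is to invoke the Immerman--Vardi Theorem on the ordered copies produced by a canonising interpretation. Let $\Gamma(\vec x)$ be an $\IFPC$-interpretation of $\{E,\les\}$ in $\{E\}$ that canonises every $G\in\CC$, and let $\CP$ be an arbitrary $\PTIME$-decidable property of graphs; our goal is to produce an $\IFPC[\{E\}]$-sentence $\psi_{\CP}$ such that, for all $G\in\CC$, $G\models\psi_{\CP}$ iff $G\in\CP$. First I would observe that $\CP$, restricted to ordered graphs, is $\PTIME$-decidable as well, so by the Immerman--Vardi Theorem there is an $\IFP[\{E,\les\}]$-sentence $\phi_{\CP}$ (hence also an $\IFPC[\{E,\les\}]$-sentence) such that for every \emph{ordered} graph $(H,\les)$ we have $(H,\les)\models\phi_{\CP}$ iff the underlying graph $H$ lies in $\CP$; here I use that $\CP$ is closed under isomorphism, so the truth value does not depend on which linear order is chosen. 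Next, I would apply the Lemma on Syntactical Interpretations (Fact~\ref{fact:si}) with $\mu=\{E,\les\}$ and $\lambda=\{E\}$ to obtain an $\IFPC[\{E\}]$-formula $\phi_{\CP}^{-\Gamma}(\vec x)$ satisfying, for every graph $G$ and every $\vec a\in G^{\vec x}$ to which $\Gamma(\vec x)$ is applicable, $G\models\phi_{\CP}^{-\Gamma}[\vec a]$ iff $\Gamma[G;\vec a]\models\phi_{\CP}$.

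It remains to pull the parameters $\vec x$ out. Let $\gapp(\vec x)$ be the applicability formula of $\Gamma$, and set
\[
\psi_{\CP}:=\exists\vec x\,\bigl(\gapp(\vec x)\wedge\phi_{\CP}^{-\Gamma}(\vec x)\bigr),
\]
which is a genuine $\IFPC[\{E\}]$-sentence (the counting operators and the two-sorted machinery are all available, and existential quantification over the tuple $\vec x$ of vertex/number variables is part of the logic). I claim this sentence works on $\CC$. Fix $G\in\CC$. Since $\Gamma(\vec x)$ canonises $G$, there is at least one applicable tuple $\vec a$, and \emph{every} applicable tuple $\vec a$ yields an ordered copy $\Gamma[G;\vec a]=(H_{\vec a},\les_{\vec a})$ of $G$, i.e.\ $H_{\vec a}\cong G$. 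Therefore, for any applicable $\vec a$, we have $\Gamma[G;\vec a]\models\phi_{\CP}$ iff $H_{\vec a}\in\CP$ iff $G\in\CP$ (using isomorphism-invariance of $\CP$). Combining this with the conclusion of Fact~\ref{fact:si}: if $G\in\CP$, then picking any applicable $\vec a$ (one exists) gives $G\models\gapp[\vec a]\wedge\phi_{\CP}^{-\Gamma}[\vec a]$, so $G\models\psi_{\CP}$; conversely, if $G\models\psi_{\CP}$, there is an applicable $\vec a$ with $G\models\phi_{\CP}^{-\Gamma}[\vec a]$, hence $\Gamma[G;\vec a]\models\phi_{\CP}$, hence $G\in\CP$. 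Thus $\psi_{\CP}$ defines $\CP$ on $\CC$, which is exactly condition (G.1)$_{\CC}$.

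Finally, as already noted in the excerpt, $\IFPC$ satisfies condition (G.2)$_{\CC}$ for every class $\CC$, so both conditions hold and $\IFPC$ captures $\PTIME$ on $\CC$. The only genuinely substantive input is the Immerman--Vardi Theorem; everything else is bookkeeping with the interpretation lemma. The one point that deserves a line of care — and is the closest thing to an obstacle — is the handling of parameters: because a canonising interpretation is allowed to produce a whole parametrised family of ordered copies rather than a single one, one must check that it does not matter which parameter tuple the existential quantifier selects, and this is precisely where the definition of "canonises" (all applicable tuples give ordered copies of $G$, and at least one applicable tuple exists) together with the isomorphism-closure of $\CP$ is used.
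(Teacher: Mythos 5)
Your argument is correct and is exactly the route the paper intends: the paper states this fact without proof, noting only that it is "a consequence of the Immerman--Vardi Theorem," and your proposal fills in that standard argument (Immerman--Vardi on the ordered copies, pull-back via the Lemma on Syntactical Interpretations, existential quantification over the parameter tuple guarded by $\gapp$, with isomorphism-invariance of the property and the two clauses in the definition of canonisation doing the work). The point you flag about parameter-independence is indeed the only place where care is needed, and you handle it correctly.
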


\section{Negative results}

In this section, we prove that \IFPC\ does not capture \PTIME\ on the
classes of chordal graphs and line graphs. Actually, our proof yields a more general result: Any logic
that captures \PTIME\ on any of these two classes and that is ``closed
under first-order reductions'' captures \PTIME\ on the class of all
graphs. It will be obvious what we mean by ``closed
under first-order reductions'' from the proofs, and it is also clear
that most ``natural'' logics will satisfy this closure condition. It
follows from our constructions that if there is a logic
capturing \PTIME\ on one of the two classes, then there is a logic
capturing \PTIME\ on all graphs.

Our negative results for \IFPC\ are based on the following theorem:

\begin{fact}[Cai, Fürer, and Immerman~\cite{caifurimm92}]\label{fact:cfi}
  There is a $\PTIME$-decidable property $\CP_{\textup{CFI}}$ of
  graphs that is not definable in \IFPC.
\end{fact}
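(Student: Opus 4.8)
The plan is to exploit the standard characterisation of $\IFPC$-definability in terms of finite-variable counting logic: every $\IFPC$-sentence $\phi$ is, on finite structures, equivalent to a sentence of the $k$-variable fragment $C^k$ of infinitary logic with counting quantifiers for some $k=k(\phi)$, and two graphs satisfy the same $C^k$-sentences if and only if Duplicator has a winning strategy in the bijective $k$-pebble game on them, which in turn is equivalent to their not being distinguished by the $(k-1)$-dimensional Weisfeiler-Lehman algorithm. Hence it suffices to produce one $\PTIME$-decidable property $\CFI$ together with, for every $k\in\PN$, graphs $G_k$ and $H_k$ such that $G_k\in\CFI$, $H_k\notin\CFI$, and Duplicator wins the bijective $k$-pebble game on $(G_k,H_k)$; then no single $\IFPC$-sentence can agree with $\CFI$ on all the pairs $(G_k,H_k)$, so $\CFI\notin\IFPC$.

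To build these pairs I would use the CFI gadget construction. Fix a connected $3$-regular base graph $\hat G$ and replace each vertex $v$ by a gadget: for every edge $e$ incident to $v$ a pair of edge vertices $e^v_0,e^v_1$, together with internal vertices encoding the constraint that an even number of the three edge vertices of $v$ be selected. For each edge $e=\{u,v\}$ of $\hat G$ one joins the two gadgets either straight ($e^u_i\sim e^v_i$) or twisted ($e^u_i\sim e^v_{1-i}$), obtaining for every set $S\subseteq E(\hat G)$ of twisted edges a graph $X(\hat G,S)$. The standard fact that a twist can be moved across any gadget without changing the isomorphism type yields $X(\hat G,S)\cong X(\hat G,S')$ precisely when $|S|\equiv|S'|\pmod 2$, so over a connected base graph there are exactly two isomorphism types, an even one $X(\hat G,\emptyset)$ and an odd one $X(\hat G,\{e\})$. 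Deciding which of the two a given graph is amounts to checking whether a prescribed $\{0,1\}$-labelling of the edge vertices extends to a global labelling satisfying all gadget parity constraints, i.e.\ to solving a linear system over $\mathbb F_2$, which is in $\PTIME$; so the property $\CFI$ of being (an encoding of) an even CFI graph over a base graph from a fixed computable, $\PTIME$-recognisable family is $\PTIME$-decidable.

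It remains to choose the base graphs so that the even and odd versions become $C^k$-indistinguishable. Fix a computable family $(\hat G_n)_n$ of connected $3$-regular graphs with $\tw(\hat G_n)\to\infty$ (walls or expanders will do), and for each $k$ pick $n$ with $\tw(\hat G_n)>k$; set $G_k:=X(\hat G_n,\emptyset)$ and $H_k:=X(\hat G_n,\{e\})$. Duplicator wins the bijective $k$-pebble game on $(G_k,H_k)$: she maintains an isomorphism from $G_k$ onto a twisted copy $X(\hat G_n,S)$ whose single twist sits on an edge of $\hat G_n$ far from the currently pebbled gadgets, and whenever Spoiler places a pebble she reroutes the twist along a path in $\hat G_n$ that avoids the at most $k$ pebbled gadgets. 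This is always possible because deleting fewer than $\tw(\hat G_n)+1$ vertices cannot disconnect $\hat G_n$, and pushing a twist across a single edge merely composes the current partial isomorphism with a gadget automorphism. Hence $G_k\in\CFI$, $H_k\notin\CFI$, and $G_k\equiv_{C^k}H_k$, so $\CFI$ is not $\IFPC$-definable. The genuinely delicate point of the whole argument is exactly this last step --- formulating the ``movable twist'' invariant so that it survives every pebble placement and keeps delivering a partial isomorphism at each position; the gadget bookkeeping, the parity/isomorphism lemma, the $\mathbb F_2$-linear-algebra decision procedure, and the reduction of $\IFPC$ to $\bigcup_k C^k$ are routine.
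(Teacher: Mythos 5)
The paper itself does not prove this statement: it is imported as a black-box Fact from the cited work of Cai, F\"urer, and Immerman, so the only question is whether your reconstruction of that result is sound. Your architecture is the standard one and matches the cited source and its modern expositions: reduce \IFPC\ to the bounded-variable counting logics $C^k$, characterise $C^k$-equivalence by Hella's bijective $k$-pebble game, build the even/odd CFI companions over suitable base graphs, and get \PTIME-decidability of the parity via recovering the base graph and solving a system over $\mathbb F_2$ (this also makes the property isomorphism-closed, as required).

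The gap is in the one step you yourself flag as delicate. Your justification for Duplicator's rerouting of the twist is the claim that ``deleting fewer than $\tw(\hat G_n)+1$ vertices cannot disconnect $\hat G_n$''. That is false: treewidth does not lower-bound vertex connectivity, and indeed any $3$-regular base graph has connectivity at most $3$ while your walls or expanders have unbounded treewidth, so for $k\ge 3$ the stated reason collapses; nor would high connectivity by itself be the right hypothesis. What the strategy actually needs is the cops-and-robbers reading of the game: the twist is the robber, the at most $k$ pebbled gadgets are cops, and Duplicator can maintain her invariant precisely when the robber can evade $k$ cops on the base graph, which by the Seymour--Thomas duality holds whenever $\tw(\hat G_n)\ge k$ (so your choice $\tw(\hat G_n)>k$ is fine); alternatively, the original CFI argument avoids treewidth altogether and uses base graphs (expanders) in which every set of at most $k$ vertices leaves a large connected component. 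Relatedly, the invariant has to be phrased with some care: the bijection Duplicator announces is an isomorphism except on the gadget pair carrying the twist, and one must check that even if Spoiler pebbles a vertex there, the map is still a partial isomorphism on the at most $k$ pebbled vertices while the twist is relocated, within the robber's escape component, in the following rounds. With the connectivity claim replaced by this cops-and-robbers/treewidth lemma (or by the balanced-separator argument of the original paper), your proof goes through and is essentially the cited one.
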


\noindent
Without loss of generality we assume that all $G\in\CP_{\textup{CFI}}$ are
  connected and of order at least $4$.

\subsection{Chordal graphs}

Let us denote the class of chordal graphs by $\CRD$.

For every graph $G$, we define a graph $\hat G$ as follows:
\begin{itemize}
\item $V(\hat G):=V(G)\cup \{v_e\mid e\in E(G)\}$, where for each
  $e\in E(G)$ we let $v_e$ be a new vertex;
\item $\displaystyle E(\hat
  G):=\binom{V(G)}{2}\cup\big\{\{v,v_e\}\bigmid v\in V(G),e\in
  E(G),v\in e\big\}.$
\end{itemize}
The following lemmas collect the properties of the transformation
$G\mapsto\hat G$ that we need here. We leave the straightforward proofs
to the reader.

\begin{lem}\label{lem:crd1}
  For every graph $G$ the graph $\hat G$ is chordal.
\end{lem}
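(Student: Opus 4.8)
The plan is to exploit the very rigid structure of $\hat G$. First I would record two observations. The vertex set $V(G)\subseteq V(\hat G)$ induces a complete graph in $\hat G$, since $\binom{V(G)}{2}\subseteq E(\hat G)$. On the other hand, each ``edge vertex'' $v_e$ with $e=\{u,w\}$ is adjacent in $\hat G$ precisely to $u$ and $w$ and to nothing else; in particular, distinct edge vertices are never adjacent, so $\{v_e\mid e\in E(G)\}$ is an independent set in $\hat G$, and every edge vertex has exactly two neighbours, both lying in $V(G)$.

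Now suppose, for contradiction, that $\hat G$ has a chordless cycle $C$ of length $\ell\ge 4$. I would split into two cases according to whether $C$ uses an edge vertex. If $C$ contains no edge vertex, then $V(C)\subseteq V(G)$; picking any two vertices of $C$ that are not consecutive on $C$ — which exist because $\ell\ge 4$ — they are distinct elements of $V(G)$, hence adjacent in $\hat G$, so the edge between them is a chord of $C$, contradicting chordlessness. If $C$ does contain an edge vertex $v_e$ with $e=\{u,w\}$, then the two neighbours of $v_e$ on $C$ must be $u$ and $w$, these being the only neighbours of $v_e$ in $\hat G$; in particular $u\ne w$ and $u,w\in V(G)$, so $\{u,w\}\in E(\hat G)$. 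Since $\ell\ge 4$, the vertices $u$ and $w$ are not consecutive on $C$ (the arc of $C$ avoiding $v_e$ has length $\ell-2\ge 2$), so $\{u,w\}$ is again a chord of $C$, a contradiction.

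Hence $\hat G$ has no chordless cycle of length at least four, i.e.\ $\hat G$ is chordal. There is no real obstacle here: the only point requiring a little care is to use the hypothesis $\ell\ge 4$ in both cases — once to guarantee the existence of two non-consecutive vertices on $C$, and once to see that the two endpoints of $e$ are non-consecutive on $C$ — so that the claimed edges are genuine chords rather than cycle edges.
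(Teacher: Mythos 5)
Your proof is correct, and since the paper leaves this lemma's "straightforward" proof to the reader, your case analysis (cycles inside the clique $V(G)$ versus cycles through a degree-two edge vertex $v_e$, whose two cycle-neighbours must be the endpoints of $e$ and hence yield a chord when the length is at least four) is exactly the intended argument. No gaps.
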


Note that for the graphs $K_2$ and $I_3:=\big([3],\emptyset\big)$ it
holds that $\hat K_2\cong\hat I_3\cong K_3$. It turns out that $K_2$
and $I_3$ are the only two nonisomorphic graphs that have isomorphic images under the
mapping $G\mapsto\hat G$. It is easy to verify this by observing that
for $G$ with $|G|\ge 4$ and $v\in V(\hat G)$, it holds that $v\in
V(G)$ if and only if $\deg(v)\ge 3$. Let $\hat\CG$ be the class of all
graphs $H$ such that $H\cong \hat G$ for some graph $G$.

\begin{lem}\label{lem:crd2}
  The class $\hat\CG$ is polynomial time decidable. Furthermore, there
  is a polynomial time algorithm that, given a graph $H\in\hat\CG$,
  computes the unique (up to isomorphism) graph
  $G\in\CG\setminus\{K\mid K\cong K_2\}$ with $\hat G\cong H$.
\end{lem}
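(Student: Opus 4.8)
The key structural fact, already hinted at in the remark preceding the lemma, is that $\hat G$ records $G$ faithfully in its degree sequence: each $v_e$ has degree exactly $2$ and the vertices $v_e$ form an independent set, whereas each $v\in V(G)$ has degree $(|G|-1)+\deg^G(v)\ge|G|-1$, which is $\ge3$ as soon as $|G|\ge4$. So first I would make the reconstruction explicit. For an arbitrary graph $H$ put $W_H:=\{v\in V(H)\mid\deg^H(v)\ge3\}$ and $U_H:=V(H)\setminus W_H$; if every $u\in U_H$ has $\deg^H(u)=2$ with $N^H(u)\subseteq W_H$, and if $u\mapsto N^H(u)$ is injective on $U_H$, set $\check H:=\bigl(W_H,\{N^H(u)\mid u\in U_H\}\bigr)$. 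The content of the degree observation is that whenever $H\cong\hat G$ with $|G|\ge4$, the isomorphism carries $W_H$ onto $V(G)$ and $U_H$ onto $\{v_e\mid e\in E(G)\}$, all side conditions above hold, and $\check H\cong G$.

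For the second assertion I would split on $|V(H)|$. Since $|V(\hat G)|=|G|+|E(G)|\ge|G|$ and $|E(G)|\le3$ when $|G|\le3$, we get $|V(\hat G)|\le6$ for $|G|\le3$; hence if $|V(H)|\ge7$ and $H\in\hat\CG$, then every $G$ with $\hat G\cong H$ has $|G|\ge4$. Thus for $|V(H)|\ge7$ I compute $\check H$ (plainly polynomial time) and output it; correctness is the previous paragraph, and uniqueness follows from the injectivity statement in the text: $G\mapsto\hat G$ is injective up to isomorphism except for $\hat K_2\cong\hat I_3$, and $K_2,I_3$ both have order $<7$, so in this regime there is exactly one $G$ (up to isomorphism) with $\hat G\cong H$, and $|W_H|=|V(G)|\ge4$ ensures $\check H\not\cong K_2$. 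For $|V(H)|\le6$ I fall back on a fixed finite table: such $H$ can only be $\hat G$ for $|G|\le|V(H)|\le6$, so I precompute once and for all, for each isomorphism type on at most $6$ vertices, whether it has the form $\hat G$ and, if so, the (unique, excluding copies of $K_2$) preimage, and look $H$ up.

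The polynomial-time decision procedure for $\hat\CG$ uses the same split. For $|V(H)|\le6$, look $H$ up in the finite table. For $|V(H)|\ge7$: form $W_H,U_H$, reject if any side condition in the definition of $\check H$ fails, and otherwise verify directly --- using the canonical identification $W_H\leftrightarrow V(\check H)$, $U_H\leftrightarrow E(\check H)$ supplied by the reconstruction --- that $H=\widehat{\check H}$, i.e. that $W_H$ induces a clique and that every edge of $H$ meeting $U_H$ runs from some $u\in U_H$ to an endpoint of its associated edge; since $H$ is simple this pins down $E(H)$ completely, so the check is conclusive and polynomial time. Accept iff the check succeeds. Rejection is sound: if $H\cong\hat G$, then (as $|V(H)|\ge7$) $|G|\ge4$, and the degree observation forces every side condition and the final identity.

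The one place needing genuine care --- and the main obstacle, although it is more bookkeeping than mathematics --- is the interface between the two regimes together with the $K_2/I_3$ anomaly: one must confirm that the threshold really separates ``$|G|$ could be $\le3$'' from ``$|G|\ge4$'', and that excluding exactly the copies of $K_2$ (neither more nor less) is what makes the preimage unique, since $\hat K_2\cong\hat I_3$ is the sole collision. This is precisely the ``straightforward proof left to the reader'' written out.
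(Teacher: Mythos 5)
Your argument is correct and is exactly the proof the paper intends: the paper leaves this lemma to the reader, offering only the observation that for $|G|\ge 4$ a vertex of $\hat G$ lies in $V(G)$ iff its degree is at least $3$, and your reconstruction $\check H$ from the degree split, the verification $H\cong\widehat{\check H}$, and the finite-table treatment of orders at most $6$ (covering the $\hat K_2\cong\hat I_3$ collision) are precisely the routine details being elided. No gaps.
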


\begin{lem}\label{lem:crd3}
  There is an \FOL-interpretation $\hat\Gamma$ of $\{E\}$ in $\{E\}$
  such that for all graphs $G$ it holds that $\hat\Gamma[G]\cong\hat
  G$.
\end{lem}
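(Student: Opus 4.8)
The plan is to represent each vertex of $\hat G$ by an (essentially unordered) pair of vertices of $G$, encoded as a $2$-tuple $\vec y=(y_1,y_2)$: an original vertex $v\in V(G)$ is represented by the diagonal pair $(v,v)$, and the new vertex $v_e$ with $e=\{u,w\}\in E(G)$ is represented by either of the two pairs $(u,w)$, $(w,u)$. Accordingly I take the parameter tuple $\vec x$ to be empty and set $\vec y=\vec y_1=\vec y_2=(y_1,y_2)$. The applicability formula is trivial, $\gapp:=\forall z\,z=z$, since $\hat G$ is defined for every graph. The domain formula is
\[
\gamma_V(y_1,y_2)\ :=\ y_1=y_2\ \vee\ E(y_1,y_2),
\]
which (using that $G$ has no loops, so that $E(y_1,y_2)$ already forces $y_1\neq y_2$) picks out exactly one diagonal pair per vertex of $G$ together with the two ordered versions of each edge of $G$. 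The equivalence formula is
\[
\gamma_\approx\big((y_1,y_2),(y_1',y_2')\big)\ :=\ (y_1=y_1'\wedge y_2=y_2')\ \vee\ (y_1=y_2'\wedge y_2=y_1'),
\]
so that the two representatives of an edge get identified, distinct edges stay separated, distinct vertices stay separated, and no diagonal pair is identified with an off-diagonal one (matching $V(G)\cap\{v_e\mid e\in E(G)\}=\emptyset$ in $\hat G$).

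Finally, recalling that in $\hat G$ the set $V(G)$ induces a clique, each $v_e$ is joined exactly to the two endpoints of $e$, and there are no further edges, the edge formula is the disjunction
\[
\gamma_E\big((y_1,y_2),(y_1',y_2')\big)\ :=\ \big(y_1=y_2\wedge y_1'=y_2'\wedge y_1\neq y_1'\big)\ \vee\ \psi\ \vee\ \psi',
\]
where $\psi:=\big(y_1=y_2\wedge E(y_1',y_2')\wedge(y_1=y_1'\vee y_1=y_2')\big)$ says that the diagonal pair $(y_1,y_2)$ encodes an endpoint of the edge encoded by $(y_1',y_2')$, and $\psi'$ is $\psi$ with the two tuples interchanged. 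All of these formulae are quantifier-free (apart from the dummy quantifier in $\gapp$), so $\hat\Gamma=\big(\gapp,\gamma_V,\gamma_\approx,\gamma_E\big)$ is an $\FOL$-interpretation of $\{E\}$ in $\{E\}$.

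To conclude, I would check that the map sending $(v,v)/_\approx\mapsto v$ and $(u,w)/_\approx\mapsto v_{\{u,w\}}$ (for $\{u,w\}\in E(G)$) is an isomorphism from $\hat\Gamma[G]$ onto $\hat G$. Well-definedness and bijectivity onto $V(\hat G)=V(G)\cup\{v_e\mid e\in E(G)\}$ are read off $\gamma_V$ and $\gamma_\approx$ as above; that adjacency is both preserved and reflected is a short case distinction over the three possibilities for a pair of tuples (both diagonal, one diagonal and one off-diagonal, both off-diagonal), directly against the three cases of $\gamma_E$. Two small points need attention in Definition~\ref{def:si}: the relation defined by $\gamma_\approx$ is already an equivalence relation, so forming its reflexive–symmetric–transitive closure changes nothing; and $\gamma_E$ is symmetric in its two tuple arguments and never holds between $\approx$-equivalent pairs (again because of loop-freeness of $G$), so $\hat\Gamma[G]$ is a genuine simple undirected graph. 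There is no real obstacle here — the construction is routine, which is exactly why the lemma only records it and leaves the verification to the reader; the only thing one must be slightly careful about is the loop-freeness of $\gamma_E$ modulo $\approx$ and the disjointness of the diagonal and off-diagonal representatives.
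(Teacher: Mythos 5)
Your construction is correct, and it is exactly the ``straightforward proof left to the reader'' that the paper intends: the same quotient-of-pairs technique the paper spells out explicitly for the line-graph interpretation $\Lambda$ in Lemma~\ref{lem:line3} (diagonal pairs for $V(G)$, unordered edge pairs via $\gamma_\approx$ for the vertices $v_e$), and your case analysis for $\gamma_V$, $\gamma_\approx$, $\gamma_E$ matches the definition of $\hat G$, including the point that distinct edge-vertices are non-adjacent.
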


\begin{theo}\label{theo:neg-chordal}
  $\IFPC$ does not capture \PTIME\ on the class $\CRD$ of chordal graphs.
\end{theo}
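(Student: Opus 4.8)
The plan is to transport the Cai--Fürer--Immerman property of Fact~\ref{fact:cfi} into the class $\CRD$ by means of the map $G\mapsto\hat G$, and then to undo this transport \emph{logically}, using the \FOL-interpretation $\hat\Gamma$ of Lemma~\ref{lem:crd3} together with the Lemma on Syntactical Interpretations. Since $\CP_{\textup{CFI}}$ is a property of arbitrary graphs whose members are typically not chordal, we cannot simply restrict it; instead we work with its image. Concretely, I would introduce the graph property
\[
\hat\CP:=\big\{H\in\CG\bigmid H\cong\hat G\text{ for some }G\in\CP_{\textup{CFI}}\big\},
\]
which is closed under isomorphism and, by Lemma~\ref{lem:crd1}, consists entirely of chordal graphs. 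First I would check that $\hat\CP$ is decidable in polynomial time: given $H$, use Lemma~\ref{lem:crd2} to test whether $H\in\hat\CG$, and if so to compute the (up to isomorphism unique) graph $G\not\cong K_2$ with $\hat G\cong H$; then accept iff $G\in\CP_{\textup{CFI}}$, a polynomial-time test. The one point to watch is that $G\mapsto\hat G$ fails to be injective only at the pair $K_2,I_3$ (where $\hat K_2\cong\hat I_3\cong K_3$); since every $G\in\CP_{\textup{CFI}}$ has order at least $4$, and for such $G$ the graph $\hat G$ determines $G$ up to isomorphism --- the vertices of degree at least $3$ in $\hat G$ are exactly $V(G)$ --- this collision is harmless, so $K_3\notin\hat\CP$ and the algorithm is correct.

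Next I would argue by contradiction. Suppose \IFPC\ captures \PTIME\ on $\CRD$, so there is an \IFPC-sentence $\phi$ with $G\models\phi\iff G\in\hat\CP$ for every chordal graph $G$. Since $\hat\Gamma[G]\cong\hat G$ is chordal for \emph{every} graph $G$, this gives $\hat\Gamma[G]\models\phi\iff\hat G\in\hat\CP$ for all graphs $G$. Applying the Lemma on Syntactical Interpretations (Fact~\ref{fact:si}) to the parameter-free \FOL-interpretation $\hat\Gamma$, viewed as an \IFPC-interpretation, produces an \IFPC-sentence $\phi^{-\hat\Gamma}$ with
\[
G\models\phi^{-\hat\Gamma}\iff\hat\Gamma[G]\models\phi\iff\hat G\in\hat\CP\qquad\text{for all graphs }G.
\]

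It then remains to relate $\hat G\in\hat\CP$ back to $G\in\CP_{\textup{CFI}}$. Let $\chi$ be the (first-order) sentence expressing ``there are at least four vertices''. I would show that $\phi^{-\hat\Gamma}\wedge\chi$ defines $\CP_{\textup{CFI}}$: if $G\in\CP_{\textup{CFI}}$ then $|G|\ge4$, so $G\models\chi$, and $\hat G\cong\hat G$ witnesses $\hat G\in\hat\CP$, so $G\models\phi^{-\hat\Gamma}$; conversely, if $G\models\phi^{-\hat\Gamma}\wedge\chi$ then $|G|\ge4$ and $\hat G\cong\hat{G'}$ for some $G'\in\CP_{\textup{CFI}}$, whence $G\cong G'$ by injectivity of $G\mapsto\hat G$ on graphs of order at least $4$, and therefore $G\in\CP_{\textup{CFI}}$ (closure under isomorphism). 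Thus $\CP_{\textup{CFI}}$ would be \IFPC-definable, contradicting Fact~\ref{fact:cfi}.

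Since Lemmas~\ref{lem:crd1}--\ref{lem:crd3} do all the combinatorial and logical heavy lifting, I do not expect a serious obstacle; the only delicate points are bookkeeping ones: (i) the single collision $\hat K_2\cong\hat I_3$, which forces the restriction to order at least $4$ and hence the extra conjunct $\chi$, and (ii) making sure $\hat\Gamma$ is parameter-free (as Lemma~\ref{lem:crd3} provides), so that $\phi^{-\hat\Gamma}$ is genuinely a sentence rather than a formula with free parameters whose applicability would need discussion. The same chain of implications also shows that any logic ``closed under first-order reductions'' that captures \PTIME\ on $\CRD$ captures \PTIME\ on the class of all graphs.
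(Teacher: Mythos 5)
Your proposal is correct and follows essentially the same route as the paper: transport $\CP_{\textup{CFI}}$ into $\CRD$ via $G\mapsto\hat G$, use Lemma~\ref{lem:crd2} for polynomial-time decidability of $\hat\CP$, and pull the hypothetical \IFPC-sentence back through $\hat\Gamma$ with the Lemma on Syntactical Interpretations to contradict Fact~\ref{fact:cfi}. Your extra conjunct $\chi$ is harmless but not needed, since the only collision $\hat K_2\cong\hat I_3\cong K_3$ already cannot lie in $\hat\CP$ because all graphs in $\CP_{\textup{CFI}}$ have order at least~$4$.
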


\begin{proof}
  Let $\CFI$ be the graph property of Fact~\ref{fact:cfi} that
  separates \PTIME\ from  \IFPC. Note that 
  $K_2\not\in\CFI$ by our assumption that all graphs in $\CFI$ have order at
  least $4$. By Lemma~\ref{lem:crd2}, the class
  $\hat\CP:=\{H\mid H\cong \hat G\text{ for some }G\in\CFI\}$ is a polynomial time decidable subclass of $\CRD$. 

  Suppose for contradiction that \IFPC\ captures polynomial time on
  $\CRD$. Then by (G.1)$_{\CRD}$ there is an \IFPC-sentence $\phi$ such
  that for all chordal graphs $G$ it holds that $G\models\phi\iff
  G\in\hat\CP$. We apply the Lemma on Syntactical Interpretations to
  $\phi$ and the interpretation $\hat\Gamma$ of Lemma~\ref{lem:crd3}
  and obtain an \IFPC-sentence $\phi^{-\hat\Gamma}$ such that for all graphs
  $G$ it holds that
  \[
  G\models\phi^{-\hat\Gamma}\iff \hat G\cong\hat\Gamma[G]\models\phi.
  \]
  Thus $\phi^{-\hat\Gamma}$ defines $\CFI$, which is a contradiction.
\end{proof}

\subsection{Line graphs}
Let $\CL$ denote the class of all line graphs, or more precisely, the
class of all graphs $L$ such that there is a graph $G$ with
$L\cong L(G)$. Observe that a triangle and a claw have the same
line graph, a triangle. Whitney~\cite{whi32} proved that for all
nonisomorphic connected graphs $G,H$ except the claw and triangle, the line
graphs of $G$ and $H$ are nonisomorphic. The following fact,
corresponding to Lemma~\ref{lem:crd2}, is essentially an algorithmic version
of Whitney's result:

\begin{fact}[Roussopoulos~\cite{rou73}]\label{fact:line2}
  The class $\CL$ is polynomial time decidable. Furthermore, there
  is a polynomial time algorithm that, given a connected graph $H\in\CL$,
  computes the unique (up to isomorphism) graph
  $G\in\CG\setminus\{K\mid K\cong K_3\}$ with $L(G)\cong H$.
\end{fact}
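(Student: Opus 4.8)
First I would split the statement into its two parts. For decidability of $\CL$, I would invoke Beineke's characterisation \cite{bei70}: a graph is isomorphic to a line graph if and only if it contains none of a fixed list of nine graphs, each with at most six vertices, as an induced subgraph. Since testing whether a fixed $c$-vertex graph occurs as an induced subgraph of an $n$-vertex graph takes time $O(n^c)$ by exhaustive search, this already yields a polynomial-time membership test for $\CL$. (Alternatively one can bypass Beineke entirely: run the reconstruction procedure below on an arbitrary graph, and at the end check in polynomial time whether the candidate root graph really has line graph isomorphic to $H$ --- since an explicit isomorphism is produced along the way, this verification is trivial --- so that recognition reduces to reconstruction.)

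For the reconstruction of the root graph, the two tools are Krausz's theorem and Whitney's theorem. A \emph{Krausz partition} of $H$ is a partition of $E(H)$ into cliques such that every vertex of $H$ lies in at most two of the parts; Krausz's theorem states that $H$ is a line graph exactly when it admits such a partition, and from any Krausz partition $\mathcal Q$ one recovers a root graph $G$ by taking one vertex of $G$ for each part of $\mathcal Q$, adding a fresh pendant vertex for each vertex of $H$ lying in only one part, and letting each vertex $v$ of $H$ be the edge of $G$ joining the (one or two) parts that contain $v$; a routine check gives $L(G)\cong H$. Whitney's theorem \cite{whi32}, quoted in the text, asserts that $K_3$ and the claw $K_{1,3}$ are the only two non-isomorphic connected graphs with isomorphic line graphs. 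Hence, for a connected line graph $H\not\cong K_3$, any two Krausz partitions give root graphs $G_1,G_2$ with $L(G_1)\cong H\cong L(G_2)$, both connected, which by Whitney must be isomorphic; so the reconstructed graph is well defined up to isomorphism and is the unique element of $\CG\setminus\{K\mid K\cong K_3\}$ with line graph $H$, while for $H\cong K_3$ the unique such graph is $K_{1,3}$ --- a case that, together with the finitely many other graphs on at most four vertices, is handled by a lookup table.

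It remains to produce some Krausz partition of a given connected line graph in polynomial time, which is the substance of Roussopoulos's algorithm \cite{rou73} and the step I expect to be the genuine obstacle. The difficulty is that the parts of a Krausz partition are not simply the maximal cliques of $H$: an edge $\{u,v\}$ of $H=L(G)$ lies in two maximal cliques precisely when the common endpoint of the two corresponding edges of $G$ lies on a triangle of $G$, so for every triangle of $H$ one must correctly decide whether it comes from a triangle of $G$ or merely from three edges of $G$ through a common vertex. The standard remedy, which I would follow, is to root the reconstruction at a vertex whose local neighbourhood forces the partition locally --- after the finitely many genuinely small graphs have been disposed of separately --- reading off the one or two parts through each incident edge from the intersection patterns of closed neighbourhoods, and then to propagate this assignment greedily through $H$, using at each step that the already-committed parts determine the part of a newly reached edge. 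Correctness and the polynomial (indeed near-linear) running time then follow from the case analysis that isolates the exceptional local configurations; feeding the resulting Krausz partition into the construction of the previous paragraph completes the proof.
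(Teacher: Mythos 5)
The paper does not prove this statement at all: it is imported as a known result, attributed to Roussopoulos~\cite{rou73}, with Whitney's theorem~\cite{whi32} quoted just beforehand to justify the uniqueness claim, so there is no internal proof to compare yours against. Your outline is the standard and correct route to that cited result: Beineke's nine forbidden induced subgraphs (or, as you note, reconstruction followed by verification) give polynomial-time recognition of $\CL$; a Krausz partition yields a root graph with $L(G)\cong H$; and Whitney's theorem gives uniqueness of the connected root up to isomorphism once $K_3$ is set aside in favour of $K_{1,3}$, exactly matching the exclusion $\CG\setminus\{K\mid K\cong K_3\}$ in the statement. The one place where your argument remains a sketch is precisely the substance of Roussopoulos's (and Lehot's) algorithm, namely computing a Krausz partition in polynomial time, i.e.\ deciding for each triangle of $H$ whether it is a clique of the partition or comes from a star of the root graph; your local-neighbourhood-plus-propagation description is the right idea, but the correctness hinges on the case analysis of exceptional small configurations (your ``lies in two maximal cliques iff the common endpoint lies on a triangle'' heuristic fails in some of them, e.g.\ for $L(K_4)$ and low-degree vertices), which is exactly what the cited paper supplies. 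Since the original text itself treats this as a black box from the literature, citing that source for the remaining step is entirely acceptable, and your proposal is if anything more detailed than the paper.
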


\begin{lem}\label{lem:line3}
  There is an \FOL-interpretation $\Lambda$ of $\{E\}$ in $\{E\}$
  such that for all graphs $G$ it holds that $\Lambda[G]\cong L(G)$.
\end{lem}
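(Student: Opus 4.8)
The goal is to write down, once and for all, an \FOL-interpretation $\Lambda$ of $\{E\}$ in $\{E\}$ that, when applied to any graph $G$, produces (an isomorphic copy of) the line graph $L(G)$. Recall from Definition~\ref{def:si} that such an interpretation is a tuple $\Lambda=\big(\gapp,\gamma_V(\vec y),\gamma_\approx(\vec y_1,\vec y_2),\gamma_E(\vec y_1,\vec y_2)\big)$ of $\FOL[\{E\}]$-formulae. The natural choice is to represent each edge $\{u,v\}\in E(G)$ — which is a vertex of $L(G)$ — by the ordered pair $(u,v)\in V(G)^2$, so $\vec y$ is a pair $(y_1,y_2)$ of vertex variables. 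Since the graphs here carry no numerical sort in the source language $\{E\}$, no $\gapp$ restriction is needed and we simply put $\gapp:=\forall x\,x=x$ (using that every graph in this paper is nonempty).

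The plan is then to set $\gamma_V(y_1,y_2):=E(y_1,y_2)$, so that the vertices of $\Lambda[G]$ before quotienting are exactly the ordered pairs $(u,v)$ with $\{u,v\}\in E(G)$; since edges are $2$-element sets, each edge is represented by exactly the two pairs $(u,v)$ and $(v,u)$, and we must identify them, so we put $\gamma_\approx\big((y_1,y_2),(y_1',y_2')\big):=\big(y_1=y_1'\wedge y_2=y_2'\big)\vee\big(y_1=y_2'\wedge y_2=y_1'\big)$; its reflexive-symmetric-transitive closure has as classes precisely the unordered edges. Finally, two edges of $G$ are adjacent in $L(G)$ iff they share an endvertex and are distinct, which on representatives is expressed by
\[
\gamma_E\big((y_1,y_2),(y_1',y_2')\big):=\Big(\bigvee_{i,j\in\{1,2\}} y_i=y_j'\Big)\wedge\neg\gamma_\approx\big((y_1,y_2),(y_1',y_2')\big).
\]
One checks that $\gamma_E$ is invariant under $\approx$ in each argument, so it descends to a well-defined relation on the quotient, and that the resulting $\{E\}$-structure $\Lambda[G]$ is isomorphic to $L(G)$ via the map sending the $\approx$-class of $(u,v)$ to the edge $\{u,v\}$.

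There is no real obstacle here: the only points requiring a line of verification are (i) that $\gamma_\approx$ already \emph{is} an equivalence relation, so taking its transitive-symmetric-reflexive closure changes nothing and the classes are exactly the two-element orbits $\{(u,v),(v,u)\}$; and (ii) that $\gamma_E$ respects $\approx$, which is immediate from its symmetry in $\{y_1,y_2\}$ and in $\{y_1',y_2'\}$. All four formulae are plainly first-order over $\{E\}$, so $\Lambda$ is an \FOL-interpretation as required, and this is what we record. (As with Lemmas~\ref{lem:crd1}--\ref{lem:crd3}, we leave the routine checks to the reader.)
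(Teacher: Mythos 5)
Your construction is correct and takes essentially the same route as the paper: the paper's interpretation uses exactly your $\gapp$, $\gamma_V(y_1,y_2):=E(y_1,y_2)$, and $\gamma_\approx$, and its edge formula is just an explicit disjunctive spelling-out of your condition ``the pairs share a coordinate and are not $\approx$-equivalent''. Nothing further is needed.
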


\begin{proof}
  We define
  $\Lambda:=\big(\lambda_{\text{app}},\lambda_V(y_1,y_2),\lambda_\approx(y_1,y_2,y_1',y_2'),\lambda_E(y_1,y_2,y_1',y_2')\big)$
  by:
  \begin{itemize}
  \item $\lambda_{\text{app}}:=\forall x\;x=x$;
  \item $\lambda_V(y_1,y_2):= E(y_1,y_2)$;
  \item $\lambda_\approx(y_1,y_2,y_1',y_2'):=(y_1=y_1'\wedge
    y_2=y_2')\vee(y_1=y_2'\wedge y_2=y_1')$; 
  \item $\lambda_E(y_1,y_2,y_1',y_2'):=(y_1=y_1'\wedge
    \neg y_2=y_2')\vee(y_2=y_2'\wedge \neg y_1=y_1')\vee(y_1=y_2'\wedge \neg y_2=y_1')\vee(y_2=y_1'\wedge \neg y_2=y_1')$.
  \end{itemize}
\end{proof}

\begin{theo}\label{theo:neg-line}
  $\IFPC$ does not capture \PTIME\ on the class $\CL$ of line graphs.
\end{theo}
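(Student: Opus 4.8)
The plan is to run the same argument as in the proof of Theorem~\ref{theo:neg-chordal}, with the operation $G\mapsto L(G)$ in place of $G\mapsto\hat G$, with Fact~\ref{fact:line2} (Roussopoulos) in place of Lemma~\ref{lem:crd2}, and with Lemma~\ref{lem:line3} in place of Lemma~\ref{lem:crd3}. Concretely, let $\CFI$ be the property of Fact~\ref{fact:cfi}, and set $\CP':=\{H\mid H\cong L(G)\text{ for some }G\in\CFI\}$. Since every $G\in\CFI$ is connected of order at least four, every member of $\CP'$ is a connected line graph, and I would first verify that $\CP'$ is a \PTIME-decidable subclass of $\CL$: given $H$, reject unless $H$ is a connected line graph (decidable in polynomial time by Fact~\ref{fact:line2}); otherwise use Fact~\ref{fact:line2} to compute the unique $G\in\CG\setminus\{K\mid K\cong K_3\}$ with $L(G)\cong H$, and accept iff $G\in\CFI$. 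Correctness uses that no $G\in\CFI$ is isomorphic to $K_3$ (they have order at least four), so the recovered graph is the one witnessing $H\in\CP'$ whenever a witness exists; membership in $\CFI$ is tested in polynomial time by Fact~\ref{fact:cfi}.

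Next, assuming for contradiction that \IFPC\ captures \PTIME\ on $\CL$, I would take an \IFPC-sentence $\phi$ with $L\models\phi\iff L\in\CP'$ for all line graphs $L$ (this exists by (G.1)$_{\CL}$) and apply the Lemma on Syntactical Interpretations to $\phi$ and the interpretation $\Lambda$ of Lemma~\ref{lem:line3}. This produces an \IFPC-sentence $\phi^{-\Lambda}$ with
\[
G\models\phi^{-\Lambda}\iff L(G)\cong\Lambda[G]\models\phi\iff L(G)\in\CP'\qquad\text{for every graph }G.
\]

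The main obstacle — the only point where the argument departs from the chordal case — is that $G\mapsto L(G)$ is not injective: adding isolated vertices does not change the line graph, and $K_3$ and the claw $K_{1,3}$ have the same line graph. Hence $\phi^{-\Lambda}$ defines $\{G\mid L(G)\in\CP'\}$, which is in general strictly larger than $\CFI$, so by itself it does not contradict Fact~\ref{fact:cfi}. To repair this I would intersect with the class $\CC_0$ of connected graphs of order at least four, which is definable by an \IFPC-sentence $\chi$ built from the sentence $\formel{conn}$ of Example~\ref{exa:conn} and a first-order sentence asserting the existence of four distinct vertices. I would then argue that $\phi^{-\Lambda}\wedge\chi$ defines $\CFI$: if $G\in\CFI$ then $G\in\CC_0$ and $L(G)\in\CP'$; conversely, if $G\in\CC_0$ and $L(G)\cong L(G_0)$ for some $G_0\in\CFI$, then $G$ and $G_0$ are both connected of order at least four, so neither is isomorphic to $K_3$, and Whitney's theorem — whose only exception is the pair $\{K_3,K_{1,3}\}$ — forces $G\cong G_0$ and hence $G\in\CFI$. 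This makes $\CFI$ \IFPC-definable, contradicting Fact~\ref{fact:cfi}.

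Everything apart from this injectivity fix is a routine transcription of the proof of Theorem~\ref{theo:neg-chordal}; the two places that need care are checking the \PTIME-decidability of $\CP'$ and checking that the restricting class $\CC_0$ is \IFP-definable, so that the intersection stays inside \IFPC. As in the chordal case, the only property of \IFPC\ that is actually used is closure under first-order interpretations, so the same argument shows that any ``reasonable'' logic capturing \PTIME\ on $\CL$ captures it on all graphs.
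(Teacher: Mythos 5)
Your proof is correct and follows essentially the same route as the paper, whose own proof of Theorem~\ref{theo:neg-line} simply declares the argument ``completely analogous'' to that of Theorem~\ref{theo:neg-chordal}, with Fact~\ref{fact:line2} and Lemma~\ref{lem:line3} substituted for Lemmas~\ref{lem:crd2} and~\ref{lem:crd3}. Your extra step of conjoining with an \IFPC-definable (indeed \IFP-definable) sentence expressing connectedness and order at least four, and then invoking Whitney's theorem to deal with the non-injectivity of $G\mapsto L(G)$ (isolated vertices, the triangle/claw pair), is precisely the detail the paper's one-line proof leaves implicit, and you handle it correctly.
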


\begin{proof}
  The proof is completely analogous to the proof of
  Theorem~\ref{theo:neg-chordal}, using Fact~\ref{fact:line2} and
  Lemma~\ref{lem:line3} instead of Lemmas~\ref{lem:crd2} and \ref{lem:crd3}.
\end{proof}

\section{Capturing polynomial time on chordal line graphs}

In this section, we shall prove that \IFPC\ captures \PTIME\ on the
class $\CRD\cap\CL$ of graphs that are both chordal and line
graphs. As we will see, such graphs have a simple treelike
structure. We can exploit this structure and canonise the graphs in
$\CRD\cap\CL$ in a similar way as trees or graphs of bounded tree width.

\begin{exa}
  Figure~\ref{fig:cl} shows an example of a chordal line graph.
\end{exa}

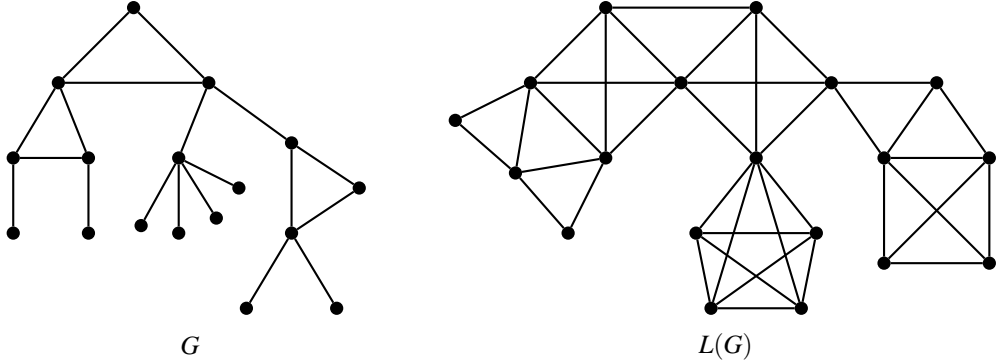
\begin{figure}
  \centering
  \begin{pspicture}(0.8,-0.5)(5.5,4)
    \dotnode(2.4,4){1}
    \dotnode(1.4,3){2}
    \dotnode(3.4,3){3}
    \dotnode(0.8,2){4}
    \dotnode(1.8,2){5}
    \dotnode(3,2){6}
    \dotnode(4.5,2.2){7}
    \dotnode(0.8,1){8}
    \dotnode(1.8,1){9}
    \dotnode(2.5,1.1){10}
    \dotnode(3,1){11}
    \dotnode(3.5,1.2){12}
    \dotnode(3.8,1.6){13}
    \dotnode(5.4,1.6){14}
    \dotnode(4.5,1){15}
    \dotnode(3.9,0){16}
    \dotnode(5.1,0){17}
    
    \ncline{1}{2}
    \ncline{1}{3}
    \ncline{2}{3}
    \ncline{2}{4}
    \ncline{2}{5}
    \ncline{3}{6}
    \ncline{3}{7}
    \ncline{4}{5}
    \ncline{4}{8}
    \ncline{5}{9}
    \ncline{6}{10}
    \ncline{6}{11}
    \ncline{6}{12}
    \ncline{6}{13}
    \ncline{7}{15}
    \ncline{7}{14}
    \ncline{14}{15}
    \ncline{15}{16}
    \ncline{15}{17}

    \rput(3.15,-0.5){$G$}
  \end{pspicture}
  \hspace{1cm}
  \begin{pspicture}(1,-0.5)(8.2,4)
    \dotnode(3,4){a}
    \dotnode(5,4){b}
    \dotnode(4,3){c}
    \dotnode(2,3){d}
    \dotnode(3,2){e}
    \dotnode(5,2){f}
    \dotnode(6,3){g}
    \dotnode(1.8,1.8){h}
    \dotnode(1,2.5){i}
    \dotnode(2.5,1){j}
    \dotnode(4.2,1){k}
    \dotnode(5.8,1){l}
    \dotnode(4.4,0){m}
    \dotnode(5.6,0){n}
    \dotnode(6.7,2){o}
    \dotnode(7.4,3){p}
    \dotnode(8.1,2){q}
    \dotnode(6.7,0.6){r}
    \dotnode(8.1,0.6){s}

    \ncline{a}{b}
    \ncline{a}{c}
    \ncline{a}{d}
    \ncline{a}{e}
    \ncline{b}{c}
    \ncline{b}{f}
    \ncline{b}{g}
    \ncline{c}{d}
    \ncline{c}{e}
    \ncline{c}{f}
    \ncline{c}{g}
    \ncline{d}{e}
    \ncline{d}{h}
    \ncline{d}{i}
    \ncline{e}{h}
    \ncline{e}{j}
    \ncline{f}{g}
    \ncline{f}{k}
    \ncline{f}{l}
    \ncline{f}{m}
    \ncline{f}{n}
    \ncline{g}{p}
    \ncline{g}{o}
    \ncline{h}{i}
    \ncline{h}{j}
    \ncline{k}{l}
    \ncline{k}{m}
    \ncline{k}{n}
    \ncline{l}{m}
    \ncline{l}{n}
    \ncline{m}{n}
    \ncline{o}{p}
    \ncline{o}{q}
    \ncline{o}{r}
    \ncline{o}{s}
    \ncline{p}{q}
    \ncline{q}{r}
    \ncline{q}{s}
    \ncline{r}{s}

    \rput(4.6,-0.5){$L(G)$}
  \end{pspicture}

  \caption{A graph $G$ and its line graph $L(G)$, which is chordal}
  \label{fig:cl}
\end{figure}

\subsection{On the structure of chordal line graphs}

It is a well-known fact that chordal graphs can be decomposed into cliques
arranged in a tree-like manner. To state this formally, we review tree
decompositions of graphs. A \emph{tree decomposition} of a
graph $G$ is a pair $(T,\beta)$, where $T$ is a tree and $\bag:V(T)\to
2^{V(G)}$ is a mapping such that the following two conditions are
satisfied:
\begin{nlist}{T}
\item\label{li:t1} For every $v\in V(G)$ the set $\{t\in
  V(T)\mid v\in\bag(t)\}$ is connected in $T$.
\item\label{li:t2} For every $e\in E(G)$ there is a $t\in
  V(T)$ such that $e\subseteq\bag(t)$.
\end{nlist}
The sets $\bag(t)$, for $t\in V(T)$, are called the \emph{bags} of the
decomposition. It will be convenient for us to always assume the tree $T$
in a tree decomposition to be rooted. This gives us the partial tree order
$\dagle^T$.
We introduce some additional notation. Let $(T,\beta)$ be a tree decomposition
of a graph $G$. For every $t\in V(T)$ we let:
\begin{align*}
  \cone(t)&:=\bigcup_{u\in V(T)\text{ with }t\dagle^T u}\bag(u),
\end{align*}
The set $\cone(t)$ is called the \emph{cone} of $(T,\beta)$ at $t$. It
easy to see that for every $t\in V(T)\setminus\{r(T)\}$ with parent $s$ the set $\bag(t)\cap\bag(s)$ separates
$\cone(t)$ from $V(G)\setminus\cone(t)$. Furthermore, for every clique
$X$ of $G$ there is a $t\in V(T)$ such that $X\subseteq\bag
(t)$. (See
Diestel's textbook \cite{die05} for proofs of these facts and
background on tree decompositions.) Another useful fact is that every tree decomposition $(T,\beta)$
of a graph $G$ can be transformed into a tree decomposition
$(T',\beta')$ such that for all $t'\in V(T')$ there exists a $t\in
V(T)$ such that $\beta'(t')=\beta(t)$, and for all $t,u\in V(T')$ with
$t\neq u$ it holds that $\beta'(t)\not\subseteq \beta'(u)$. 

\begin{fact}\label{fact:crd-cli}
  A nonempty graph $G$ is chordal if and only if $G$ has a tree decomposition into
  cliques, that is, a tree decomposition $(T,\beta)$ such that for all $t\in
  V(T)$ the bag $\beta(t)$ is a clique of $G$.
\end{fact}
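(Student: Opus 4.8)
The plan is to prove both implications by induction on $|G|$, in each case peeling off \emph{simplicial} vertices, where $v$ is simplicial in $G$ if $N^G(v)$ is a clique of $G$. Two elementary facts drive the argument. First (Dirac): every chordal graph with at least two vertices has a simplicial vertex; if one wants this self-contained, it follows by showing that a minimal separator of a chordal graph between two nonadjacent vertices is itself a clique — otherwise two nonadjacent vertices of the separator, joined by shortest paths through the two sides, would form a chordless cycle of length $\geq 4$ — and then a short induction. Second: if $H$ is chordal and we add a new vertex $v$ all of whose neighbours form a clique, the result is again chordal, because any chordless cycle of length $\geq 4$ through $v$ would have the two neighbours of $v$ on the cycle joined by a chord.

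For the direction ``chordal $\Rightarrow$ tree decomposition into cliques'', if $|G|=1$ take $T$ a single node with $\bag$ equal to $V(G)$. If $|G|\geq 2$, pick a simplicial vertex $v$; the induced subgraph $G\setminus\{v\}$ is chordal, so by the induction hypothesis it has a tree decomposition $(T',\bag')$ into cliques. Since $N^G(v)$ is a clique of $G\setminus\{v\}$, by the recalled fact that every clique lies in some bag there is a node $t\in V(T')$ with $N^G(v)\subseteq\bag'(t)$. Attach a new leaf $t^*$ to $t$, set $\bag(t^*):=N^G(v)\cup\{v\}$, and keep all other bags. Condition \ref{li:t1} holds because $t^*$ is attached to a node whose bag already contains $N^G(v)$, while $v$ itself occurs only in $\bag(t^*)$; condition \ref{li:t2} holds since the only new edges are those from $v$ into $N^G(v)\subseteq\bag(t^*)$; and $\bag(t^*)$ is a clique of $G$ because $N^G(v)$ is a clique and $v$ is adjacent to all of it.

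For the converse, suppose $(T,\bag)$ is a tree decomposition of $G$ into cliques, and induct on $|V(T)|$. If $T$ has a single node $r$, then (by \ref{li:t1} every vertex lies in some bag, the empty set not being connected) $\bag(r)=V(G)$ is a clique, so $G$ is complete, hence chordal. Otherwise pick a leaf $\ell$ with parent $p$ and set $A:=\bag(\ell)\setminus\bag(p)$. By \ref{li:t1} every vertex of $A$ occurs only in the bag $\ell$, so by \ref{li:t2} its entire neighbourhood lies inside the clique $\bag(\ell)$; thus each $v\in A$ is simplicial in $G$, and this persists after deleting any other vertices of $A$. Deleting the leaf $\ell$ from $T$ and the set $A$ from $G$ yields a tree decomposition of $G\setminus A$ into cliques with one fewer node: no surviving bag meets $A$, every surviving bag is still a clique, and an edge formerly covered only by $\bag(\ell)$ survives in $\bag(p)$ since both of its endpoints, being outside $A$, lie in $\bag(\ell)\cap\bag(p)$. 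By the induction hypothesis $G\setminus A$ is chordal, and adding the vertices of $A$ back one at a time — each with clique neighbourhood in the partially rebuilt graph — and invoking the second fact above, $G$ is chordal as well.

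The bulk of the work is routine verification of the tree-decomposition axioms in the two inductive steps. The only genuinely substantive ingredient is Dirac's theorem on the existence of simplicial vertices (equivalently, that minimal separators of chordal graphs are cliques); I expect that to be the one point needing real care if full details were demanded, everything else being bookkeeping. If one prefers to avoid even that, the converse can alternatively be phrased as ``the graph obtained from $(T,\bag)$ by turning every bag into a clique is chordal, and here it equals $G$'', reducing to the standard chordality of such fill-ins.
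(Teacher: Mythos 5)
Your proof is correct. The paper itself offers no proof of this statement: it is quoted as a known fact (with Diestel's textbook cited just above for background on tree decompositions), so there is nothing to compare against line by line. Your argument is the standard one. In the forward direction you peel off a simplicial vertex, use the (legitimately invoked, and stated in the paper just before the Fact) property that every clique of a graph is contained in some bag of any tree decomposition, and reattach a leaf bag $N(v)\cup\{v\}$; the verification of (T.1) and (T.2) is as you say. In the converse you peel off a leaf bag, observe that the vertices of $A=\beta(\ell)\setminus\beta(p)$ occur in no other bag and hence are simplicial with neighbourhood inside the clique $\beta(\ell)$, note that $\beta(\ell)\setminus A\subseteq\beta(p)$ so the truncated decomposition still covers all remaining edges, and re-insert the vertices of $A$ one at a time using the easy lemma that attaching a vertex along a clique preserves chordality. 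You correctly isolate the one genuinely substantive ingredient, Dirac's theorem on simplicial vertices, and your sketch of it (minimal separators of chordal graphs are cliques, then induction) is the standard proof; your closing alternative via the chordality of the fill-in of a tree decomposition is also fine, though it amounts to the same induction.
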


For a graph $G$, we let $\MCL(G)$ be the set of all maximal cliques in $G$ with
respect to set inclusion. 
If we combine Fact~\ref{fact:crd-cli} with the observations about
tree decomposition stated before the fact, we obtain the following lemma:

\begin{lem}\label{lem:good-dec}
  Let $G$ be a nonempty chordal graph. Then $G$ has a tree decomposition $(T,\beta)$
  with the following properties:
  \begin{eroman}
  \item For every $t\in V(T)$ it holds that $\bag(t)\in\MCL(G)$.
  \item For every $X\in \MCL(G)$ there is exactly one $t\in
    V(T)$ such that $\bag(t)=X$.
  \end{eroman}
  We call a tree decomposition satisfying conditions (i) and (ii) a \emph{good
    tree decomposition} of $G$.
\end{lem}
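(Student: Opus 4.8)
The plan is to combine the standard fact that chordal graphs are exactly the graphs admitting a tree decomposition into cliques (Fact~\ref{fact:crd-cli}) with the two structural observations recalled just before the lemma. First I would start from an arbitrary tree decomposition $(T,\beta)$ of $G$ into cliques, which exists by Fact~\ref{fact:crd-cli} since $G$ is chordal and nonempty. The first observation to invoke is that $(T,\beta)$ can be transformed into a tree decomposition $(T_1,\beta_1)$ that is \emph{reduced}, in the sense that every bag $\beta_1(t_1)$ equals some bag $\beta(t)$ of the original decomposition and no bag is contained in another: $\beta_1(t)\not\subseteq\beta_1(u)$ for distinct $t,u\in V(T_1)$. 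Since every $\beta_1(t_1)$ is one of the original bags, it is still a clique of $G$, so $(T_1,\beta_1)$ is a tree decomposition into cliques all of whose bags are pairwise incomparable under inclusion.

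Next I would argue that in such a reduced decomposition into cliques, every bag is in fact a \emph{maximal} clique, and every maximal clique occurs exactly once. For maximality: let $t\in V(T_1)$ and suppose $\beta_1(t)$ were contained in a larger clique $X$ of $G$. By the fact recalled before the lemma, every clique of $G$ is contained in some bag, so $X\subseteq\beta_1(u)$ for some $u\in V(T_1)$; then $\beta_1(t)\subseteq X\subseteq\beta_1(u)$, forcing $t=u$ by the incomparability property, whence $X\subseteq\beta_1(t)$ and $\beta_1(t)=X$ is maximal after all. This gives condition (i). For condition (ii): given $X\in\MCL(G)$, the same ``every clique sits in a bag'' fact yields a node $t$ with $X\subseteq\beta_1(t)$; since $\beta_1(t)$ is itself a clique and $X$ is a maximal clique, $X=\beta_1(t)$, so $t$ exists. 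Uniqueness is immediate: if $\beta_1(t)=\beta_1(u)=X$ with $t\neq u$, that contradicts incomparability (each is trivially contained in the other). Setting $(T,\beta):=(T_1,\beta_1)$ completes the proof.

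The only point requiring a little care — and the closest thing to an obstacle — is making sure the ``reduced'' decomposition $(T_1,\beta_1)$ still satisfies the tree decomposition axioms \ref{li:t1} and \ref{li:t2}; this is exactly the content of the transformation recalled before the lemma (contract an edge of $T$ whenever one endpoint's bag is contained in the other's, repeating until no containments remain), which the excerpt cites to Diestel's textbook, so I may simply invoke it. Everything else is a routine chase through the two elementary facts already available: every clique of a chordal graph lies in some bag of any tree decomposition into cliques, and reduced decompositions have pairwise-incomparable bags. Hence the proof is essentially a two-line deduction once those ingredients are in place.

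\begin{proof}
  Since $G$ is nonempty and chordal, Fact~\ref{fact:crd-cli} provides a tree
  decomposition $(T_0,\beta_0)$ of $G$ into cliques. By the transformation
  recalled before the lemma, we may pass to a tree decomposition $(T,\beta)$ of
  $G$ such that every bag $\beta(t)$ equals $\beta_0(t_0)$ for some
  $t_0\in V(T_0)$, and such that $\beta(t)\not\subseteq\beta(u)$ for all
  distinct $t,u\in V(T)$. As each $\beta(t)$ is one of the original bags, it is
  a clique of $G$.

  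We check (i). Fix $t\in V(T)$. As recalled before the lemma, the clique
  $\beta(t)$ is contained in some bag, say $\beta(t)\subseteq\beta(u)$; by the
  incomparability property this forces $t=u$. Now let $X$ be any clique of $G$
  with $\beta(t)\subseteq X$. Again $X\subseteq\beta(u')$ for some
  $u'\in V(T)$, so $\beta(t)\subseteq X\subseteq\beta(u')$, hence $t=u'$ and
  $X\subseteq\beta(t)$. Thus $\beta(t)=X$, i.e.\ $\beta(t)$ is a maximal clique
  of $G$.

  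We check (ii). Let $X\in\MCL(G)$. Since $X$ is a clique, there is a
  $t\in V(T)$ with $X\subseteq\beta(t)$; as $\beta(t)$ is a clique and $X$ is
  maximal, $X=\beta(t)$, so such a $t$ exists. If $\beta(t)=\beta(u)=X$ for
  distinct $t,u\in V(T)$, then $\beta(t)\subseteq\beta(u)$, contradicting the
  incomparability property. Hence $t$ is unique.
\end{proof}
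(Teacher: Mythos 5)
Your proof is correct and follows exactly the route the paper intends: it combines Fact~\ref{fact:crd-cli} with the two observations recalled just before it (every clique lies in some bag, and any tree decomposition can be reduced so that its bags are among the original bags and pairwise incomparable), which is precisely how the paper says Lemma~\ref{lem:good-dec} is obtained. The deductions for (i) and (ii) are the standard, correct ones.
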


Let us now turn to line graphs. Let $L:=L(G)$ be the line graph of a graph $G$. For
every $v\in V(G)$, let $X(v):=\{e\in E(G)\mid v\in e\}\subseteq
V(L)$. Unless $v$ is an isolated vertex, $X(v)$ is a clique in $L$. Furthermore, we have
\[
 L=\bigcup_{v\in V(G)}L[X(v)].
\]
Observe that for all $v,w\in V(G)$, if $e:=\{v,w\}\in E(G)$ then $X(v)\cap
  X(w)=\{e\}$, and if $\{v,w\}\not\in E(G)$ then $X(v)\cap
  X(w)=\emptyset$.
The following proposition, which is probably well-known, characterises
the line graphs that are chordal:

\begin{prop}\label{prop:lc}
  Let $L=L(G)\in\CL$. Then 
  \[
  L\in\CRD\iff\text{all cycles in $G$ are triangles}.
  \]
\end{prop}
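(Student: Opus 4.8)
The plan is to prove both directions of the biconditional by exploiting the correspondence between cliques in $L=L(G)$ and vertices of $G$, together with the characterisation of chordal graphs via the absence of chordless cycles of length at least four. The key structural fact is that every clique in $L$ is contained in some $X(v)$ for a vertex $v\in V(G)$ (this is a standard fact about line graphs, since a clique in $L(G)$ is a family of pairwise-intersecting edges of $G$, hence either all share a common vertex or form a triangle in $G$ — and a triangle of $G$ gives a $3$-clique contained in each of its three $X(v)$'s). I would state this as a preliminary observation and then use it freely.

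For the direction ``all cycles in $G$ are triangles $\Rightarrow L\in\CRD$'': suppose for contradiction that $L$ has a chordless cycle $C=(e_1,e_2,\ldots,e_k)$ with $k\ge 4$, where the $e_i$ are edges of $G$ and consecutive ones (cyclically) are adjacent in $L$, i.e.\ share an endvertex, while non-consecutive ones are disjoint. For each $i$, let $v_i$ be the common endvertex of $e_i$ and $e_{i+1}$ (indices mod $k$). The hard part is to show the $v_i$ are distinct and form a cycle in $G$ of length $k\ge 4$: one checks that $v_i\ne v_{i+1}$ (otherwise $e_{i+1}$ would have two distinct endpoints both equal to something forcing $e_i$ and $e_{i+2}$ to meet, contradicting chordlessness for $k\ge 4$), and that $v_iv_{i+1}$ is exactly the edge $e_{i+1}$, so $(v_1,\ldots,v_k)$ traces a closed walk in $G$; chordlessness of $C$ prevents repeated vertices, so it is a genuine $k$-cycle in $G$ with $k\ge4$, contradicting the hypothesis. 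Conversely, for ``$L\in\CRD\Rightarrow$ all cycles in $G$ are triangles'': if $G$ had a cycle $v_1v_2\cdots v_k v_1$ with $k\ge 4$, set $e_i:=\{v_i,v_{i+1}\}$; then $e_1,\ldots,e_k$ is a cycle in $L$, and because the $v_i$ are distinct and $e_i\cap e_j=\emptyset$ whenever $i,j$ are non-consecutive, this cycle is chordless in $L$ of length $k\ge 4$, contradicting chordality of $L$.

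The main obstacle I anticipate is the bookkeeping in the first direction: carefully justifying that the sequence of shared endvertices $v_i$ is well-defined (each consecutive pair of edges shares a \emph{unique} vertex, which uses $k\ge 4$ so that $e_i$ and $e_{i+2}$ are non-adjacent and hence share no endpoint), that the $v_i$ are pairwise distinct, and that no ``short-cut'' chord is introduced. One has to handle the edge cases around small indices and the cyclic wrap-around, and rule out the degenerate possibility that the $e_i$ all pass through one vertex (which would make them pairwise adjacent, contradicting $k\ge4$ and chordlessness). Once the correspondence $C\leftrightarrow$ ($k$-cycle in $G$) is nailed down in both directions, the proof is essentially immediate from Fact~\ref{fact:crd-cli}'s underlying definition of chordality. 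I would present the argument by isolating the observation about cliques and $X(v)$ first, then doing the two contrapositives as outlined above.
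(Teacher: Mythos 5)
Your proof is correct and follows essentially the same route as the paper: both directions translate cycles of $G$ into chordless cycles of $L$ and vice versa, exactly as the paper does (your shared endvertices $v_i$ of consecutive edges are the paper's choice $v_i\in e_i\setminus e_{i-1}$ in slightly different clothing, and your distinctness bookkeeping via chordlessness is the same argument). One caveat: your ``preliminary observation'' is false as stated --- the $3$-clique of $L$ arising from a triangle of $G$ is \emph{not} contained in any single $X(v)$, since each $X(v)$ contains only two of its three edges --- but as your two directions never actually invoke it, it should simply be dropped.
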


  Note that on the right hand side, we do not only consider chordless cycles.

\begin{proof}
  For the forward direction, suppose that $L\in\CRD$, and let $C\subseteq
  G$ be a cycle. Then $L[E(C)]$ is a chordless
  cycle in $L$. Hence $|C|\le 3$, that is, $C$ is a triangle.

  For the backward direction, suppose that all cycles in $G$ are
  triangles, and let $C\subseteq L$ be a chordless cycle of length
  $k$. Let $e_1,\ldots,e_k$ be the vertices of $C$ in cyclic
  order. To simplify the notation, let $e_0:=e_k$. Then for all
  $i\in[k]$ it holds that $\{e_{i-1},e_i\}\in E(L)$ and thus
  $e_{i-1}\cap e_i\neq\emptyset$. Let $v_0,v_1\in V(G)$ such that
  $e_1=\{v_0,v_1\}$, and for $i\in[2,k]$, let $v_i\in e_i\setminus
  e_{i-1}$. Then $v_i\neq v_j$ for all $j\in[i-2]$, and if $i<k$ even for $j\in[0,i-2]$, because the cycle $C$ is
  chordless and thus $e_i\cap e_j=\emptyset$. Furthermore,
  $v_k=v_0$. Thus $\{v_1,\ldots,v_k\}$ is the vertex set of a cycle in
  $G$, and we have $k= 3$.
\end{proof}

\begin{lem}\label{lem:lc1}
  Let $L=L(G)\in\CRD\cap\CL$, and let
  $X\in\MCL(L)$ and $e=\{v,w\}\in X$. Then $X=X(v)$ or $X=X(w)$ or
    there is an $x\in V(G)$ such that $\{x,v\},\{x,w\}\in E(G)$ and
    $X=\big\{e,\{x,v\},\{x,w\}\big\}$. 
\end{lem}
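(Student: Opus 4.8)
The plan is a short structural case analysis, organised by how the edges of $G$ corresponding to vertices of $X$ meet the fixed edge $e=\{v,w\}$. Since $L=L(G)$, every element of $X$ is an edge of $G$, and since $X$ is a clique of $L$ containing $e$, every $f\in X$ with $f\neq e$ shares an endvertex with $e$, i.e.\ $v\in f$ or $w\in f$; as $G$ is simple, the only edge meeting $e$ in \emph{both} endvertices is $e$ itself. So I would write $X\setminus\{e\}=X_v\sqcup X_w$, where $X_v$ consists of the edges in $X$ through $v$ (hence not through $w$) and $X_w$ of those through $w$.

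First I would treat the case $X_w=\emptyset$: then every member of $X$ contains $v$, so $X\subseteq X(v)$; the set $X(v)$ is a clique of $L$ and contains $e$, so maximality of $X$ yields $X=X(v)$. Symmetrically $X_v=\emptyset$ gives $X=X(w)$ (this also covers degenerate situations, e.g.\ when $G$ has a $K_2$-component). In the remaining case both $X_v$ and $X_w$ are nonempty; pick $f=\{v,a\}\in X_v$ and $g=\{w,b\}\in X_w$. Adjacency of $f$ and $g$ in $L$ forces them to share an endvertex, and the inequalities $v\neq w$, $a\neq w$, $b\neq v$ — all forced by the definitions of $X_v,X_w$ and by simplicity — leave only the possibility $a=b=:x$. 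Thus $\{x,v\},\{x,w\}\in E(G)$, so $v,w,x$ span a triangle in $G$. To finish, I would show $X$ has no further elements: any $h\in X_v$ is adjacent in $L$ to $g=\{x,w\}$, and since $h$ avoids $w$ while $v\notin\{x,w\}$, this forces $x\in h$, i.e.\ $h=\{v,x\}=f$; hence $X_v=\{f\}$, and symmetrically $X_w=\{g\}$, so $X=\{e,\{x,v\},\{x,w\}\}$, which is the third alternative.

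I do not anticipate a genuine obstacle here — the argument is elementary once one looks at pairwise intersections of the edges of $G$ corresponding to vertices of $X$. The only point needing care is the bookkeeping of the (in)equalities among $v,w,a,b$ so that each ``must share an endvertex'' step is actually forced rather than merely possible; note also that chordality of $L$ plays no role in this particular lemma — it will be used only afterwards, when this local description of maximal cliques is assembled into a tree decomposition of $L$.
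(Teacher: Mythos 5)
Your proof is correct and follows essentially the same route as the paper's: split $X\setminus\{e\}$ according to which endvertex of $e$ each edge contains, use maximality of $X$ when all of $X$ lies in $X(v)$ or $X(w)$, and otherwise force the two witnessing edges to meet in a common third vertex $x$ and show every remaining element of $X$ coincides with $e$, $\{x,v\}$, or $\{x,w\}$. Your side remark that chordality of $L$ is not used here is also accurate; the paper's argument likewise only uses that $L$ is a line graph.
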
  

\begin{proof}
  For all $f\in X$, either $v\in f$ or $w\in f$, because $f$ is
  adjacent to $e$. Hence $X\subseteq X(v)\cup X(w)$. If
  $X\subseteq X(v)$, then $X=X(v)$ by the maximality of
  $X$. Similarly, if $X\subseteq X(w)$ then $X=X(w)$. Suppose
  that $X\setminus X(v)\neq\emptyset$ and $X\setminus
  X(w)\neq\emptyset$. Let $f\in X\setminus X(v)$ and $g\in
  X\setminus X(w)$. As $X$ is a clique, we have $\{f,g\}\in E(L)$
  and thus $f\cap g\neq\emptyset$. Hence there is an $x\in V(G)$ such
  that $f=\{x,w\}$ and $g=\{x,v\}$. Furthermore, $X=\{e,f,g\}$. To
  see this, let $h\in X$. Then $\{h,e\}\in E(L)$ and thus $v\in
  h$ or $w\in h$. Say, $v\in h$. If $w\in h$, then $h=e$. Otherwise, we have
  $x\in h$, because $h$ is adjacent to $g$. Thus $h=g$.
\end{proof}

\begin{lem}\label{lem:lc2}
    Let $L\in\CRD\cap\CL$, and let
  $X_1,X_2\in\MCL(L)$ be distinct. Then $|X_1\cap X_2|\le 2$.
\end{lem}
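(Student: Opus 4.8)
The plan is to run everything through the classification of maximal cliques furnished by Lemma~\ref{lem:lc1}. First I would repackage that lemma into a clean dichotomy for an arbitrary $X\in\MCL(L)$: pick any edge $e=\{v,w\}\in X$ (one exists since $X$ is a nonempty clique in $L=L(G)$); Lemma~\ref{lem:lc1} then says $X=X(v)$, or $X=X(w)$, or $X=\{e,\{x,v\},\{x,w\}\}$ for some common neighbour $x$ of $v$ and $w$ in $G$. In the last alternative the three edges $e$, $\{x,v\}$, $\{x,w\}$ are pairwise distinct (here I would note $x\ne v,w$ because $G$ is loopless, so $\{x,v\}$ and $\{x,w\}$ really are edges), whence $|X|=3$. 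So every maximal clique of $L$ is either of the form $X(u)$ for some $u\in V(G)$ or has exactly three elements.

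Given distinct $X_1,X_2\in\MCL(L)$, I would then split into two cases according to this dichotomy. If one of them, say $X_2$, satisfies $|X_2|=3$, then $|X_1\cap X_2|\le|X_2|=3$, and equality would force $X_1\cap X_2=X_2$, i.e.\ $X_2\subseteq X_1$; since $X_2$ is a \emph{maximal} clique and $X_1$ is a clique, this yields $X_1=X_2$, contradicting distinctness. Hence $|X_1\cap X_2|\le 2$ in this case. Otherwise both cliques are of the form $X_1=X(v_1)$, $X_2=X(v_2)$ with $v_1,v_2\in V(G)$; distinctness of $X_1,X_2$ forces $v_1\ne v_2$, and then the elementary observation recorded just before Proposition~\ref{prop:lc} --- namely that $X(v_1)\cap X(v_2)$ equals $\{\{v_1,v_2\}\}$ if $\{v_1,v_2\}\in E(G)$ and $\emptyset$ otherwise --- gives $|X_1\cap X_2|\le 1$. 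In either case $|X_1\cap X_2|\le 2$, as required.

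I do not anticipate a genuine obstacle: all the content is already in Lemma~\ref{lem:lc1}, and the proof is a two-case bookkeeping argument. The two points demanding a little care are (a) confirming that the ``triangle'' alternative of Lemma~\ref{lem:lc1} always gives a clique of size exactly three, with no degenerate coincidences among its members, and (b) treating the boundary case $|X_1\cap X_2|=3$ correctly by invoking maximality, rather than carelessly asserting that the intersection must be a \emph{proper} subset. An essentially equivalent alternative would be to argue by contradiction from $|X_1\cap X_2|\ge 3$: choose three distinct edges in the intersection, observe that they pairwise share a vertex of $G$, and apply Lemma~\ref{lem:lc1} directly to pin down the shapes of $X_1$ and $X_2$; but the dichotomy formulation above avoids any case analysis on which vertices are shared and seems cleaner.
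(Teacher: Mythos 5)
Your proof is correct and rests on exactly the same ingredients as the paper's: Lemma~\ref{lem:lc1}'s classification of maximal cliques (either of the form $X(v)$ or a triangle of three edges) together with the observation that $|X(v)\cap X(w)|\le 1$ for $v\neq w$. The paper merely packages this as a contradiction: $|X_1\cap X_2|\ge 3$ forces $|X_1|,|X_2|\ge 4$ by maximality and distinctness, hence $X_1=X(v_1)$ and $X_2=X(v_2)$, contradicting $|X(v_1)\cap X(v_2)|\le 1$ --- precisely the ``essentially equivalent alternative'' you mention at the end.
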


\begin{proof}
    Let $L=L(G)$ for some graph $G$. Suppose for contradiction that
    $|X_1\cap X_2|\ge 3$. Then $|X_1|,|X_2|\ge 4$, because $X_1$ and $X_2$ are
    distinct maximal cliques. By Lemma~\ref{lem:lc1}, it follows that there
    are vertices $v_1,v_2\in V(G)$ such that $X_1=X(v_1)$ and
    $X_2=X(v_2)$, which implies $|X_1\cap X_2|\le 1$. This is a contradiction.
\end{proof}

\begin{lem}\label{lem:lc3}
  Let $L\in\CRD\cap\CL$, and let
  $X_1,X_2,X_3\in\MCL(L)$ be pairwise distinct such that $X_1\cap
  X_2\cap X_3\neq\emptyset$. Then there are $i,j,k$ such that
  $\{i,j,k\}=[3]$ and $X_i\subseteq X_j\cup X_k$ and $|X_i|=3$.
\end{lem}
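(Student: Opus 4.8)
The plan is to read off the possible shapes of the three cliques from Lemma~\ref{lem:lc1} and then eliminate all configurations except the desired one, using Proposition~\ref{prop:lc} to forbid $4$-cycles in the underlying graph. Write $L=L(G)$, fix a vertex $e$ of $L$ lying in $X_1\cap X_2\cap X_3$, and write $e=\{v,w\}$; by Proposition~\ref{prop:lc} every cycle of $G$ is a triangle. Applying Lemma~\ref{lem:lc1} to the edge $e\in X_i$ for each $i\in[3]$, every $X_i$ is of one of the following kinds: $X_i=X(v)$, or $X_i=X(w)$, or $X_i$ is a \emph{triangle clique} $\{e,\{x,v\},\{x,w\}\}$ for some vertex $x$ of $G$ adjacent to both $v$ and $w$; note that such an $x$ satisfies $x\notin\{v,w\}$, so a triangle clique has exactly three elements, and it is neither $X(v)$ nor $X(w)$ since it contains an edge missing $v$ and an edge missing $w$.

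First I would bound how many of the $X_i$ are of each kind. If two of the $X_i$ were triangle cliques, say $\{e,\{x,v\},\{x,w\}\}$ and $\{e,\{x',v\},\{x',w\}\}$, then $x\neq x'$ (otherwise these two cliques coincide), and then the four edges $\{v,x\},\{x,w\},\{w,x'\},\{x',v\}$ of $G$ form a cycle of length four on the pairwise distinct vertices $v,x,w,x'$ (distinctness uses only that $G$ has no loops, together with $x\neq x'$), contradicting Proposition~\ref{prop:lc}. Hence at most one $X_i$ is a triangle clique. On the other hand, $e\in X(u)$ forces $u\in e$, so $X(v)$ and $X(w)$ are the only cliques of the form $X(u)$ that contain $e$; since $X_1,X_2,X_3$ are pairwise distinct, at most one of them equals $X(v)$ and at most one equals $X(w)$. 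Since there are three pairwise distinct cliques and each is of one of the three kinds, exactly one of them --- say $X_i$ --- is a triangle clique, and the other two are $X(v)$ and $X(w)$ in some order.

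It remains to verify the conclusion for this index $i$. Write $X_i=\{e,\{x,v\},\{x,w\}\}$ and let $\{j,k\}=[3]\setminus\{i\}$, so that $\{i,j,k\}=[3]$ and $\{X_j,X_k\}=\{X(v),X(w)\}$. Since $e$ and $\{x,v\}$ lie in $X(v)$ while $\{x,w\}$ lies in $X(w)$, we obtain $X_i\subseteq X(v)\cup X(w)=X_j\cup X_k$; and $|X_i|=3$ because $e=\{v,w\}$, $\{x,v\}$, $\{x,w\}$ are pairwise distinct (as $x\notin\{v,w\}$ and $v\neq w$). This is exactly the assertion of the lemma. The only step that goes beyond bookkeeping is the $4$-cycle argument ruling out two triangle cliques through $e$: this is precisely where one uses that Proposition~\ref{prop:lc} forbids \emph{all} cycles of length at least four in $G$, not merely chordless ones; everything else is a direct case analysis off Lemma~\ref{lem:lc1}.
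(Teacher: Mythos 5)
Your proof is correct and takes essentially the same route as the paper: apply Lemma~\ref{lem:lc1} at the common vertex $e=\{v,w\}$, rule out two distinct ``triangle cliques'' through $e$, and conclude that the remaining two cliques must be $X(v)$ and $X(w)$, whence $X_i\subseteq X_j\cup X_k$ with $|X_i|=3$. The only cosmetic difference is that you obtain the contradiction from the $4$-cycle $v,x,w,x'$ in $G$ via Proposition~\ref{prop:lc}, while the paper exhibits the corresponding chordless $4$-cycle directly in $L$ --- the same configuration, phrased on the other side of the line-graph correspondence.
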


\begin{proof}
  Let $L=L(G)$ for some graph $G$.
  Let $e\in X_1\cap X_2\cap X_3$. Suppose that $e=\{v,w\}\in
  E(G)$. 
  As the cliques $X_1,X_2,X_3$ are distinct, it follows from Lemma~\ref{lem:lc1}
  that there is an $i\in[3]$ and an $x\in V(G)$ such that
  $X_i=\big\{e,\{x,v\},\{x,w\}\big\}$. Choose such $i$ and $x$.

  \begin{claim}1
    For all $j\in[3]\setminus\{i\}$, either $X_j=X(v)$ or $X_j=X(w)$.

    \proof
    Suppose for contradiction that $X_j\neq X(v)$ and $X_j\neq
    X(w)$. Then by Lemma~\ref{lem:lc1}, there exists a $y\in V(G)$ such that $\{y,v\},\{y,w\}\in E(G)$ and
    $X_j=\big\{e,\{y,v\},\{y,w\}\big\}$. But then 
    \[
    L\big[\{y,v\},\{v,x\},\{x,w\},\{w,y\}\big]
    \]
    is a chordless cycle in $L$, which contradicts $L$ being chordal.
    \uend
  \end{claim}

  Thus there are $j,k$ such that $\{i,j,k\}=[3]$ and $X_j=X(v)$ and
  $X_k=X(w)$. Then $X_i\subseteq X_j\cup X_k$.
\end{proof}

\begin{lem}\label{lem:lc4}
  Let $L\in\CRD\cap\CL$. Then every good tree decomposition $(T,\beta)$ of $L$
  satisfies the following conditions (in addition to conditions (i) and
  (ii) of Lemma~\ref{lem:good-dec}):
  \begin{eroman}
    \setcounter{erom}{2}
  \item For all $t\in V(T)$, 
    \begin{itemize}
    \item either $|\bag(t)|=3$ and $t$ has at most three neighbours in
      $T$ (the \emph{neighbours} of a node are its children and the
      parent),
    \item or for all distinct neighbours $u,u'$ of\/ $t$\/ in $T$ it holds
      that $\bag(u)\cap\bag(u')=\emptyset$.
    \end{itemize}
  \item For all $t,u\in V(T)$ with $t\neq u$ it holds that
    $|\bag(t)\cap\bag(u)|\le 2$.
  \end{eroman}
\end{lem}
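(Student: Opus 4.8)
The plan is to read off (iv) directly from Lemma~\ref{lem:lc2}, and to prove (iii) by the following dichotomy: fix $t$ and suppose it fails the second alternative, i.e.\ it has two distinct neighbours $u_0,u_0'$ with $\bag(u_0)\cap\bag(u_0')\neq\emptyset$; I will then show the first alternative holds, namely $|\bag(t)|=3$ and $t$ has at most three neighbours. For (iv): if $t\neq u$ then, since $(T,\bag)$ is good, $\bag(t)$ and $\bag(u)$ are \emph{distinct} maximal cliques of $L$, so $|\bag(t)\cap\bag(u)|\le 2$ by Lemma~\ref{lem:lc2}. Throughout the argument for (iii) the one structural fact I would use is the standard separator property of tree decompositions: for \emph{any} two distinct neighbours $u,u'$ of $t$ one has $\bag(u)\cap\bag(u')\subseteq\bag(t)$, since any vertex of $L$ lying in both $\bag(u)$ and $\bag(u')$ lies, by connectivity condition (T.1), in every bag along the $T$-path from $u$ to $u'$, and that path runs through $t$.

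\textbf{Pinning down $\bag(t)$.} Write $X:=\bag(t)$, pick $e\in\bag(u_0)\cap\bag(u_0')$, so $e\in X$ by the separator property, and write $e=\{v,w\}$ (every vertex of $L=L(G)$ is an edge of $G$). The three maximal cliques $X,\bag(u_0),\bag(u_0')$ are pairwise distinct (goodness) and all contain $e$, so refining the argument of Lemma~\ref{lem:lc3} — equivalently, applying Lemma~\ref{lem:lc1} to each of them at the vertex $e$ — each of them is $X(v)$, or $X(w)$, or a clique of the form $\big\{e,\{x,v\},\{x,w\}\big\}$ arising from a triangle $\{v,w,x\}$ of $G$. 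At most one of the three can have the last form: two such, with $x\neq x'$, would give a $4$-cycle $v\,x\,w\,x'$ in $G$, contradicting Proposition~\ref{prop:lc}. Hence exactly one is of that form and the other two are $X(v)$ and $X(w)$ in some order. Now $X$ cannot be $X(v)$: otherwise one of $u_0,u_0'$ has bag $\big\{e,\{x,v\},\{x,w\}\big\}$ and the other has bag $X(w)$, so $\{x,w\}\in\bag(u_0)\cap\bag(u_0')\subseteq X=X(v)$ — impossible, since $\{x,w\}$ is not incident to $v$. Symmetrically $X\neq X(w)$. Therefore $X=\big\{e,\{x,v\},\{x,w\}\big\}$ for some $x$; in particular $|X|=3$, and $X$ is the edge set of a triangle of $G$, which I rename $\{a,b,c\}$.

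\textbf{Bounding the neighbours.} Let $u$ be a neighbour of $t$ with $\bag(u)\cap X\neq\emptyset$; then $\bag(u)$ is a maximal clique containing one of the $L$-vertices $\{a,b\},\{b,c\},\{a,c\}$, say $\{a,b\}$. By Lemma~\ref{lem:lc1}, $\bag(u)\in\{X(a),X(b)\}$ or $\bag(u)$ has the form $\big\{\{a,b\},\{y,a\},\{y,b\}\big\}$; the latter is the edge set of a triangle $\{a,b,y\}$ with $y\neq c$ (as $\bag(u)\neq X=\bag(t)$), again producing a forbidden $4$-cycle $a\,c\,b\,y$ in $G$. Hence every neighbour of $t$ meeting $X$ has bag in $\{X(a),X(b),X(c)\}$, and by goodness each of these cliques is the bag of at most one node, so there are at most three such neighbours. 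A neighbour $u$ with $\bag(u)\cap X=\emptyset$ would mean $\bag(t)\cap\bag(u)=\emptyset$, and an empty separator on a tree edge forces $L$ to be disconnected (the cone $\cone(u)$ and its complement are then both nonempty with no edge between them); assuming $L$ connected — the canonisation in the next subsection handles components separately — no such neighbour exists, so $t$ has at most three neighbours, and the first alternative of (iii) holds.

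\textbf{Main obstacle.} The delicate step is establishing $|X|=3$: Lemma~\ref{lem:lc3} as a black box only guarantees that \emph{one} of the three cliques has size $3$, and one genuinely needs the ``overlap $\subseteq\bag(t)$'' observation to force that clique to be $X$ rather than one of the two stars $X(v),X(w)$. The only other point requiring care is the mild connectivity assumption used in the neighbour count; without it one still obtains ``at most three neighbours whose bag meets $\bag(t)$,'' which is in any case all the subsequent canonisation argument needs.
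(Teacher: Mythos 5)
Your proof is correct, and it works with the same toolkit the paper uses (Lemma~\ref{lem:lc2} for (iv); Lemma~\ref{lem:lc1}, Proposition~\ref{prop:lc} and the separator property of tree decompositions for (iii)); but the paper's own proof of this lemma is only the assertion that (iii) follows from Lemma~\ref{lem:lc3} and (iv) from Lemma~\ref{lem:lc2}, so in effect you have supplied the reasoning it leaves implicit. Two of your additions are worth noting. First, you are right that Lemma~\ref{lem:lc3} as a black box only yields that \emph{some} one of the three cliques $\bag(t),\bag(u_0),\bag(u_0')$ is a triangle contained in the union of the other two; your argument via Lemma~\ref{lem:lc1} pins it down to $\bag(t)$ (a slightly shorter route to the same point: if the size-$3$ clique were $\bag(u_0)$, then $\bag(u_0)\subseteq\bag(t)\cup\bag(u_0')$ together with $\bag(u_0)\cap\bag(u_0')\subseteq\bag(t)$ would give $\bag(u_0)\subseteq\bag(t)$, contradicting maximality and distinctness). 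Your explicit treatment of the neighbour bound is likewise needed, since it does not follow from Lemma~\ref{lem:lc3} alone. Second, the connectivity caveat you flag is genuine: for disconnected $L$ the clause ``$t$ has at most three neighbours'' can fail in a good tree decomposition (attach to a triangle bag with two overlapping neighbours several further nodes whose bags lie in other components), so the statement should either assume $L$ connected or count only neighbours whose bags meet $\bag(t)$; since the paper's canonisation is carried out for connected graphs, nothing downstream is affected, and your proof handles this point correctly.
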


\begin{proof}
  Let $(T,\beta)$ be a good tree decomposition of $L$. Such a
  decomposition exists because $L$ is chordal. As all bags of the
  decomposition are maximal cliques of $L$, condition (iii) follows
  from Lemma~\ref{lem:lc3} and condition (iv) follows from
  Lemma~\ref{lem:lc2}.
\end{proof}

\subsection{Canonisation}

\begin{theo}\label{theo:can}
  The class $\CRD\cap\CL$ of all chordal line graphs admits \IFPC-definable
  canonisation.
\end{theo}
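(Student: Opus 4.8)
The plan is to imitate the \IFPC-canonisation of graphs of bounded tree width (and of trees), exploiting the rigid tree-decomposition structure established in Lemmas~\ref{lem:good-dec} and~\ref{lem:lc4}, with one extra device to cope with the fact that bags --- that is, maximal cliques --- may be arbitrarily large, which is exactly what prevents a direct appeal to the bounded-tree-width result. I would first make some reductions. Canonisation reduces to the connected case, since the connected components can be canonised individually, sorted by the (linearly ordered) isomorphism type of their ordered copies with the counting mechanism, and concatenated. For connected $L$ there are, by Whitney's theorem and Fact~\ref{fact:line2}, two cases: $L\cong K_3$, or $L=L(G)$ for a \emph{unique} graph $G$. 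The finitely many graphs of order below any fixed bound $N_0$ --- in particular $K_3$ --- can be recognised by an \IFPC-interpretation (isomorphism among graphs of bounded order is \FOL-definable), which then outputs a fixed ordered copy. So assume $|L|\ge N_0$ and $L=L(G)$ with $G$ unique; by Proposition~\ref{prop:lc} every cycle of $G$ is a triangle, so $G$ is intuitively a tree of edges and triangles, a shape which $L$ inherits through Lemma~\ref{lem:lc4}.

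Next I would fix the combinatorial skeleton. By Lemma~\ref{lem:good-dec} the node set of a good tree decomposition $(T,\bag)$ of $L$ is in canonical bijection with $\MCL(L)$, so the only non-canonical choices involved are the edges of a clique tree and a root. A maximal clique of size at least $3$ is, by Lemma~\ref{lem:lc2}, the \emph{unique} maximal clique containing any three of its vertices, and a maximal clique of size at most $2$ is determined by its vertices; hence every maximal clique of $L$ can be named by a bounded tuple of vertices of $L$. I would therefore let the parameter $\vec x$ of the canonising \IFPC-interpretation pin down the root clique and, if the clique tree happens not to be unique, also a choice among the polynomially many clique trees, with $\gapp$ verifying that $\vec x$ encodes a legitimate such choice. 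From $\vec x$, the edge relation of the rooted tree $T$, the bags, the cones $\cone(t)$, and the separators $\bag(t)\cap\bag(s)$ (with $s$ the parent of $t$) are all \IFPC-definable.

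The heart of the proof is then a bottom-up recursion along $(T,\bag)$, carried out by a single inflationary fixed point that processes the nodes by decreasing depth and computes, for every node $t$, a canonical ordered copy of $L[\cone(t)]$ together with the position in it of the separator $\bag(t)\cap\bag(s)$ to the parent. The induction step at a node $t$ with children $t_1,\dots,t_m$ orders the bag $\bag(t)$, picks an order of the children that depends only on the isomorphism types of the already-canonised (hence linearly comparable, for instance lexicographically on their ordered adjacency matrices) copies of $L[\cone(t_i)]$ --- so that isomorphic subtrees are treated identically --- and glues the pieces along the separators. Lemma~\ref{lem:lc4} is exactly what keeps this step uniform: if $|\bag(t)|=3$ then $t$ has at most three neighbours in $T$, a bounded situation handled by a fixed case distinction; and if $|\bag(t)|>3$ then distinct children of $t$ attach through pairwise disjoint bags and any two bags meet in at most two vertices, so each vertex of the large bag carries an independent packet of attachment data, and the bag can be ordered by sorting these packets with the counting operator. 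At the root $r$ one has $\cone(r)=V(L)$, so the fixed point delivers an ordered copy of $L$; this makes $\vec x$ a parameter under which the interpretation canonises $L$, so $\CRD\cap\CL$ admits \IFPC-definable canonisation, and hence \IFPC\ captures \PTIME\ on $\CRD\cap\CL$.

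I expect the main obstacle to be precisely this last step: turning the informal ``recurse along the clique tree, sorting isomorphic subtrees'' scheme into an actual \IFPC-interpretation in the presence of unbounded bags. The unbounded width rules out a direct reduction to bounded-tree-width canonisation, and it is tamed only because Lemma~\ref{lem:lc4} forces a large bag to have an almost trivial local attachment pattern, collapsing the problem to a bounded case together with one counting-based sorting step; but writing all of this out as formulae, with the separator bookkeeping, is a long routine. By comparison the non-uniqueness of the clique tree is a minor point, absorbed into the parameters, since only polynomially many clique trees exist and every one of them is acceptable for canonisation.
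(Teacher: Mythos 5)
Your overall strategy coincides with the paper's: reduce to connected graphs, name maximal cliques by bounded tuples of vertices (via Lemma~\ref{lem:lc2}), root a good tree decomposition by a parameter tuple, and canonise bottom-up by lexicographically sorting the canonical copies of the children's cones and gluing along separators, with Lemma~\ref{lem:lc4} splitting the step into the bounded case $|\bag(t)|=3$ and the case of pairwise disjoint neighbour bags. The genuine gap is at the point you dismiss as minor: making the decomposition tree itself $\IFPC$-definable. You propose to absorb ``a choice among the polynomially many clique trees'' into the parameter tuple $\vec x$, but nothing in the paper (or in your argument) bounds the number of clique trees of a chordal line graph polynomially --- for chordal graphs in general it is exponential --- and, more fundamentally, even a polynomial bound would not help: a bounded tuple of vertices does not come with a definable surjection onto the set of clique trees, so ``$\gapp$ verifies that $\vec x$ encodes a legitimate choice'' and ``from $\vec x$ the edge relation of $T$ is $\IFPC$-definable'' are exactly the assertions that need proof. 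This is where the paper does its real work: it defines, parameter-free, a directed acyclic graph $D$ on the set $U$ of vertex triples $(u_1,u_2,u_3)$ having a unique maximal clique through them, with $\sep^U(\vec u)=\{u_1,u_2\}$, $\comp^U(\vec u)$ the component of $G\setminus\{u_1,u_2\}$ containing $u_3$, the order $\dagle$ given by strict containment of components and $F$ its successor relation; it then proves (Claims 1 and 2 of the paper's proof) that above a single root triple $\vec u_0$ this definable DAG reproduces a good tree decomposition, node for node, with matching bags, cones and separators. Only the three vertices of $\vec u_0$ are parameters, and applicability is checked by verifying structural properties of the defined DAG rather than by recognising a ``legitimate'' clique tree. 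Without this (or an argument that clique-tree adjacency between maximal cliques is itself definable, e.g.\ via uniqueness of the clique tree for connected chordal line graphs, which you would also have to prove), your recursion has nothing definable to recurse on.

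A smaller point: in the large-bag case you say the bag ``can be ordered by sorting these packets with the counting operator''. You cannot in general define a linear order on the bag, since distinct bag vertices may carry isomorphic attachment data (the extreme case being a complete graph, cf.\ Example~\ref{exa:complete}); what the construction must do --- and what the paper does --- is build the canonical ordered copy directly, by counting the multiplicities of the lexicographically sorted isomorphism types of the attached pieces and laying down that many consecutive intervals, with the (at most two) separator vertices of each child placed at prescribed positions. Your phrasing elsewhere (``picks an order of the children that depends only on the isomorphism types'') suggests you intend this, but as written the sorting step conflates ordering the graph with ordering its canonical copy, and only the latter is achievable.
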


\begin{cor}
  \IFPC\ captures \PTIME\ on the class of all chordal line graphs.
\end{cor}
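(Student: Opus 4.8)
The plan is to follow the by-now-standard route for capturing \PTIME\ on treelike classes: we exhibit an \IFPC-interpretation of $\{E,\les\}$ in $\{E\}$ that produces, from a chordal line graph $L$, an ordered copy of $L$, by canonising $L$ along a tree decomposition. The structural work has essentially been done in Lemmas~\ref{lem:lc1}--\ref{lem:lc4}: every $L\in\CRD\cap\CL$ has a \emph{good} tree decomposition $(T,\bag)$ (one node per maximal clique, bags equal to maximal cliques), and this decomposition is \emph{thin} in the sense of Lemma~\ref{lem:lc4} — any two bags meet in at most two vertices, and at each node either the bag is a triangle with at most three neighbours, or the adhesion sets to the neighbours are pairwise disjoint. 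First I would reduce to the connected case: connectivity is \IFP-definable, so one can canonise each connected component separately, then sort the resulting ordered copies (say by order, then lexicographically by edge relation) and concatenate, reading off the multiplicity of each isomorphism type of component by a counting term. It therefore suffices to canonise a connected chordal line graph.

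Next I would make the decomposition \IFPC-definable. The point is that, although bags may be large, each maximal clique is by Lemma~\ref{lem:lc1} of one of three very restricted shapes ($X(v)$, $X(w)$, or a triangle), so it is determined by its adjacency in $L$ together with a bounded tuple of its own vertices; thus the nodes of $T$ (addressed by short tuples), bag membership, the adhesion sets (which are exactly the minimal separators of $L$), and the clique-tree structure are all first-order, or at worst \IFP-, definable from $L$. I would then root $T$ at a definable place — the centre of the tree, obtained by iteratively peeling leaves in a fixed-point, with the lexicographically least of the (at most two) possible outputs chosen if the centre is an edge. This fixes for every node $t$ its parent, its children, its bag $\bag(t)$, its adhesion set $S_t:=\bag(t)\cap\bag(\text{parent of }t)$ with $|S_t|\le 2$, and its cone $\cone(t)$.

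The heart of the proof is a bottom-up fixed-point computation that assigns to each node $t$, and to each of the at most two orderings of $S_t$, a canonical ordered copy of $L[\cone(t)]$ extending that ordering; the fixed-point iterates over nodes in order of decreasing height and so converges after at most $|L|$ stages. At a node $t$ with bag $X=\bag(t)$ one must amalgamate the already canonised cones of the children with the clique $X$ itself. Each child attaches along its adhesion set, a subset of $X$ of size at most two; when $|X|\ge 4$ these adhesion sets are pairwise disjoint by Lemma~\ref{lem:lc4}(iii), and the vertices of $X$ lying in no child adhesion set and not in $S_t$ form a single orbit. One groups the children by isomorphism type — where the type records the adhesion-set size, the canonical ordered cone, and whether that cone admits the automorphism swapping its two adhesion vertices — uses a counting term to read off the multiplicity of each type, and lays the children out blockwise in a fixed order, orienting a two-point adhesion set by the lexicographically smaller of the two canonical ordered cones it supports; the free vertices of $X$ are then placed in a designated block of positions. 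When $|X|=3$ the node has at most three neighbours, so only boundedly much symmetry is involved and the amalgamation is handled directly. Carrying the ordered copy at the root (whose cone is all of $V(L)$) and packaging the fixed-point as an interpretation whose universe lives on the numerical sort and whose edge and order relations on positions are read off by counting then yields the canonisation; combined with Fact on \IFPC-definable canonisation this gives the corollary.

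I expect the main obstacle to be exactly this amalgamation step, and in particular carrying it out inside \IFPC: because many isomorphic subtrees, and many mutually indistinguishable ``free'' vertices of a large bag, must be sent to distinct positions of the canonical copy, one cannot define a labelling of $V(L)$ directly, and must instead verify that the position-to-adjacency correspondence on the numerical sort is well defined — that is, invariant under the automorphisms the construction leaves unresolved at each stage. This is where the counting mechanism of \IFPC\ is indispensable and where the bookkeeping is most delicate; the thinness supplied by Lemma~\ref{lem:lc4} is precisely what keeps the unresolved symmetry at each node bounded and makes the argument go through. The remaining ingredients — the structure theory (already done), the definability of the thin tree decomposition, and the reduction to the connected case — are routine adaptations of arguments already used for trees and for graphs of bounded tree width.
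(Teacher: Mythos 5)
Your proposal is correct and follows essentially the same route as the paper: the corollary is obtained from \IFPC-definable canonisation of $\CRD\cap\CL$ (the paper's Theorem~\ref{theo:can}) together with the fact that definable canonisation implies capturing, and your sketch of that canonisation — addressing the maximal cliques of the good tree decomposition by bounded tuples, making the clique-tree structure \IFP-definable, and building canonical ordered copies bottom-up by grouping children by isomorphism type, counting multiplicities, and ordering blocks lexicographically, with the bag-size-3 case of Lemma~\ref{lem:lc4} treated separately — matches the paper's proof of that theorem. The only divergences are implementation details: the paper represents tree nodes by separator-plus-component triples forming a definable DAG rooted at a parameter tuple (exploiting that definable canonisation permits parameters), where you instead root at the centre of the clique tree and resolve the remaining ties lexicographically.
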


\begin{proof}[Proof of Theorem~\ref{theo:can}]
  The proof resembles the proof that classes of graphs of bounded tree
  width admit \IFPC-definable canonisation \cite{gromar99} and also
  the proof of Theorem~7.2 (the ``Second Lifting Theorem'') in
  \cite{gro08a}. Both of these proofs are generalisations of the
  simple proof that the class of trees admits \IFPC-definable
  canonisation (see, for example, \cite{gklmsvvw07-3}). We shall
  describe an inductive construction that associates with each
  chordal line graph $G$ a canonical copy $G'$ whose universe is
  an initial segment of the natural numbers. For readers with some
  experience in finite model theory, it will be straightforward to
  formalise the construction in \IFPC.  We only describe the
  canonisation of \emph{connected} chordal line graphs that are not
  complete graphs. It is
  easy to extend it to arbitrary chordal line graphs. For complete
  graphs, which are chordal line graphs, cf.~Example~\ref{exa:complete}

  To describe the construction, we fix a connected graph
  $G\in\CRD\cap\CL$ that is not a complete graph. Note that this
  implies $|G|\ge 3$. Let $(T,\beta^T)$ be a good tree
  decomposition of $G$. As $G$ is not a complete graph, we have
  $|T|\ge 2$. Without loss of generality we may assume that
  the root $r(T)$ has exactly one child in $T$, because every tree has
  at least one node of degree at most $1$ and properties (i), (ii) of
  a good decomposition do not depend on the choice of the root. It
  will be convenient to view the rooted tree $T$ as a directed graph,
  where the edges are directed from parents to children.

  Let $U$ be the set of all triples $(u_1,u_2,u_3)\in V(G)^3$ such that $u_3\neq
  u_1,u_2$ (possibly, $u_1=u_2$), and there is a unique $X\in\MCL(G)$
  such that $u_1,u_2,u_3\in X$. For all $\vec u=(u_1,u_2,u_3)\in U$, let $A(\vec u)$ be the connected component of
  $G\setminus\{u_1,u_2\}$ that contains $u_3$ (possibly, $A(\vec u)=G\setminus\{u_1,u_2\}$). We define mappings $\sep^U,\comp^U,\cone^U,\bag^U:U\to 2^{V(G)}$
  as follows: For all $\vec u=(u_1,u_2,u_3)\in U$, we let $\sep^U(\vec
      u):=\{u_1,u_2\}$ and $\comp^U(\vec u):=V(A(\vec u))$. We let
      $\cone^U(\vec u):=\sep^U(\vec u)\cup\comp^U(\vec u)$, and we let
      $\bag^U(\vec u)$ the unique $X\in\MCL(G)$ with $u_1,u_2,u_3\in X$.
  We define a partial order $\dagle$ on $U$ by letting $\vec
  u\dagle\vec v$ if and only if $\vec u=\vec v$ or $\comp(\vec
  u)\supset\comp(\vec v)$.  We let $F$ be the successor relation of
  $\dagle$, that is, $(\vec u,\vec v)\in F$ if $\vec u\dagsle\vec v$
  and there is no $\vec w\in U\setminus\{\vec u,\vec v\}$ such that
  $\vec u\dagsle\vec w\dagsle\vec v$. Finally, we let $D:=(U,F)$. Then
  $D$ is a directed acyclic graph. It is easy to verify that for all 
  $\vec u\in U$ we have
  \begin{equation}\label{eq:bag}
  \bag^U(\vec u)=\cone^U(\vec u)\setminus\bigcup_{\vec v\in N^D(\vec
    u)}\comp^U(\vec v),
  \end{equation}
  where $N^D(\vec u)=\big\{\vec v\in U\bigmid (\vec u,\vec v)\in F\big\}$.

  \medskip
  Recall that we also have mappings $\bag^T,\cone^T:V(T)\to 2^{V(G)}$
  derived from the tree decomposition. We define a mapping
  $\sep^T:V(T)\to 2^{V(G)}$ as follows:
  \begin{itemize}
  \item For a node $t\in V(T)\setminus\{r(T)\}$ with parent $s$, we
    let $\sep^T(t):=\bag^T(t)\cap\bag^T(s)$.
  \item For the root $r:=r(T)$, we first define a set $S\subseteq V(G)$
    by letting 
    $S:=\bag^T(r)\setminus\bag^T(t)$, where $t$ is the unique child of
    $r$. (Remember our assumption that $r$ has exactly one child.)
    Then if $|S|\ge 2$, we choose distinct $v,v'\in S$ and let
    $\sep^T(r):=\{v,v'\}$, and if $|S|=1$ we let $\sep^T(r):=S$.
   \end{itemize}
   Note that $\bag^T(t)\setminus\sep^T(t)\neq\emptyset$ and
   $1\le|\sep^T(t)|\le 2$ for all $t\in V(T)$. For the root, this
   follows immediately from the definition of $\sep^T(t)$, and for nodes $t\in
   V(T)\setminus\{r(T)\}$ it follows from Lemma~\ref{lem:lc4}.
   We define a mapping $\comp^T:V(T)\to 2^{V(G)}$ by letting
   $\comp^T(t):=\cone^T(t)\setminus\sep^T(t)$ for all $t\in V(T)$.  We
   define a mapping $g:V(T)\to U$ by choosing, for every node $t\in
   V(T)$, vertices $u_1,u_2$ such that $\sep^T(t)=\{u_1,u_2\}$
   (possibly $u_1=u_2$) and a
   vertex $u_3\in\bag(t)\setminus\sep(t)$ and letting
   $g(t):=(u_1,u_2,u_3)$. Note that $(u_1,u_2,u_3)\in U$, because
   $\bag^T(t)$ is the unique maximal clique in $\MCL(G)$ that contains $u_1,u_2,u_3$.

   \begin{claim}1
     The mapping $g$ is a
     directed graph embedding of $T$ into $D$. Furthermore, for all $t\in
     V(T)$ it holds that $\comp^T(t)=\comp^U(g(t))$, $\bag^T(t)=\bag^U(g(t))$,
     $\cone^T(t)=\cone^U(g(t))$, and $\sep^T(t)=\sep^U(g(t))$.
     
     \proof We leave the straightforward inductive proof to the
     reader.
     \uend
   \end{claim}

  Let $\vec u_0:=g(r(T))$, and let $U_0$ be the subset of $U$ consisting of all $\vec u\in U$ such
  that $\vec u_0\dagle\vec u$. Let $F_0$ be the restriction of $F$
  to $U_0$ and $D_0:=(U_0,F_0)$. Note that $U_0$ is upward closed with
  respect to $\dagle$ and that $g(T)\subseteq D_0$. 

   \begin{claim}2
     There is a mapping $h:U_0\to V(T)$ such that $h$ is a directed graph
     homomorphism from $D_0$ to $T$ and $h\circ g$ is the identity
     mapping on $V(T)$. Furthermore, for all $\vec u\in U_0$ it holds
     that $\comp^U(\vec u)=\comp^T(h(\vec u))$, $\bag^U(\vec u)=\bag^T(h(\vec u))$,
     $\cone^U(\vec u)=\cone^T(h(\vec u))$, and $\sep^U(\vec u)=\sep^T(h(\vec u))$.
     
     \proof We define $h$ by induction on the partial order $\dagle$.
     The unique $\dagle$-minimal element of $U_0$ is $\vec u_0$. We
     let $h(\vec u_0):=r(T)$. Now let $\vec v=(v_1,v_2,v_3)\in U_0$,
     and suppose that $h(\vec u)$ is defined for all $\vec u\in U_0$
     with $\vec u\dagsle\vec v$. Let $\vec u\in U_0$ such that $(\vec
     u,\vec v)\in F_0$, and let $s:=h(\vec u)$. By the induction
     hypothesis, we have $\comp^U(\vec u)=\comp^T(s)$, $\bag^U(\vec
     u)=\bag^T(s)$, $\cone^U(\vec u)=\cone^T(s)$, and $\sep^U(\vec
     v)=\sep^T(s)$. The set $\comp^U(\vec v)$ is the vertex set of a
     connected component of $G\setminus\sep^U(\vec v)$ which is
     contained in $\comp^U(\vec u)\subseteq\cone^U(\vec
     u)=\cone^T(s)$, and by \eqref{eq:bag} it holds that $\comp^U(\vec
     v)\cap\bag^U(\vec u)=\emptyset$. Hence there is a child $t$ of $s$
     such that $\comp^U(\vec v)\subseteq\comp^T(t)$. Let $\vec
     v':=g(t)$. If $\comp^U(\vec v)\subset\comp^T(t)=\comp^U(\vec v')$, then
     $\vec u\dagsle\vec v'\dagsle\vec v$, which contradicts $(\vec
     u,\vec v)\in F$. Hence $\comp^U(\vec v)=\comp^T(t)$ and thus
     $\sep^U(\vec v)=\sep^T(t)$. This also implies $\cone^U(\vec
     v)=\cone^T(t)$ and $\bag^U(\vec v)=\bag^T(t)$. We let $h(\vec
     v):=t$.
 
     To prove that $h$ is really a homomorphism,
     it remains to prove that for all $\vec u'\in U_0$ with $(\vec
     u',\vec v)\in F_0$ we also have $h(\vec u')=s$. So let $\vec u'\in U_0$ with $(\vec
     u',\vec v)\in F_0$, and let $s'=h(\vec u')$. Suppose for
     contradiction that $s\neq s'$. If $s'\dagsle^T s$ then
     $\comp^U(\vec u')\supset\comp^U(\vec u)$ and thus $\vec u'\dagsle\vec
     u$, which contradicts $(\vec
     u',\vec v)\in F_0$. Thus $s'\not\dagle^T s$, and similarly
     $s\not\dagle^T s'$. But then both $\sep^T(s)$ and $\sep^T(s')$ separate
     $\cone^T(s)$ from $\cone^T(s')$ in $G$. This contradicts $\comp^U(\vec
     v)\subseteq\comp^T(s)\cap\comp^T(s')\subseteq\big(\cone^T(s)\cap\cone^T(s')\big)\setminus\big(\sep^T(s)\cup\sep^T(s')\big)$.
     \uend
   \end{claim}

  Thus essentially, the ``treelike'' decomposition $(D_0,\beta^U)$ is the
  same as the tree decomposition $(T,\beta^T)$. However, the
  decomposition $(D_0,\beta^U)$ is \IFP-definable with three parameters fixing
  the tuple $\vec u_0=g(r(T))$.

  Let us now turn to the canonisation. For every $\vec u\in U_0$, we
  let $G(\vec u):=G[\cone(\vec u)]$. Then $G=G({\vec u_0})$. We inductively define for every
  $\vec u=(u_1,u_2,u_3)\in U_0$ a graph $H({\vec u})$ with the following properties:
  \begin{eroman}
    \item
      $V\big(H({\vec u})\big)=[n_{\vec u}]$, where $n_{\vec
        u}:=|\cone(\vec u)|=\big|V\big(G_{\vec u})\big)\big|$.
    \item There is  an isomorphism $f_{\vec u}$ from $G({\vec u})$ to
      $H({\vec u})$ such that if $u_1\neq u_2$ it holds that $f_{\vec
        u}(u_1)=1$ and  $f_{\vec
        u}(u_2)=2$, and if $u_1= u_2$ it holds that $f_{\vec
        u}(u_1)=1$.
  \end{eroman}
  For the induction basis, let $\vec u\in U_0$ with $N^{D_0}(\vec
  u)=\emptyset$. Then $\cone^U(\vec
  u)=\bag^U(\vec u)$, and $G({\vec u})=K[\bag^U(\vec u)]$. We let
  $n:=n_{\vec u}=|\beta^U(\vec u)|$ and $H({\vec
    u}):=K_{n}$. Then (i) and (ii) are obviously satisfied.

  For the induction step, let $\vec u\in U_0$ and $N^{D_0}(\vec
  u)=\{\vec v^1,\ldots,\vec v^n\}\neq\emptyset$. It follows from
  Claim~2 that for all $i,j\in[n]$, either $\cone(\vec v^i)=\cone(\vec
  v^j)$ or $\cone(\vec v^i)\cap\cone(\vec v^j)=\sep(\vec
  v^i)\cap\sep(\vec v^j)\subseteq\bag(\vec u)$.  We may assume without
  loss of generality that there are $i_1,\ldots,i_m\in[n]$ such
  that $i_1<i_2<\ldots<i_m$ and for all $j,j'\in[m]$ with $j\neq j'$ we have $\cone(\vec v^{i_j})\neq\cone(\vec
  v^{i_{j'}})$ and for all $j\in[m]$,
  $i\in[i_j,i_{j+1}-1]$ we have $\cone(\vec v^i)=\cone(\vec
  v^{i_j})$. Here and in the
  following we let $i_{m+1}:=n+1$. 

  The class of all graphs whose vertex set is a subset of $\PN$ may
  be ordered lexicographically; we let $H\slex H'$ if either $V(H)$
  is lexicographically smaller than $V(H')$, that is, the first
  element of the symmetric difference $V(H)\triangle V(H')$ belongs
  to $V(H')$, or $V(H)=V(H')$ and $E(H)$ is lexicographically smaller
  than $E(H')$ with respect to the lexicographical ordering of
  unordered pairs of natural numbers, or $H=H'$.
  Without loss of
  generality we may assume that for each $j\in[m]$ it holds that
  \begin{equation*}
    \label{eq:lex1}
    H(\vec v^{i_j})\slex H(\vec v^{i_j+1})\slex H(\vec v^{i_j+2})\slex\ldots\slex H(\vec v^{i_{j+1}-1})
  \end{equation*}
  and, furthermore, 
  \begin{equation}
    \label{eq:lex2}
    H(\vec v^{i_1})\slex H(\vec v^{i_2})\slex \ldots\slex H(\vec v^{i_m})
  \end{equation}
  Note that, even though the graphs $G(\vec v^{i_1}),G(\vec
  v^{i_2}),\ldots,G(\vec v^{i_m})$ are vertex disjoint subgraphs of
  $G(\vec u)$, they may be isomorphic, and hence not all of the
  inequalities in \eqref{eq:lex2} need to be strict. For all $j\in [m]$,
  let $\vec v_j:=\vec v^{i_j}$ and $G_j:=G(\vec v_j)$ an $H_j:=H(\vec
    v_j)$. Then $H_1\slex H_2\slex\ldots\slex H_m$. Let
  $j_1,\ldots,j_\ell\in[m]$ such that $j_1<j_2<\ldots<j_\ell$ and
  $H_j=H_{j_i}$ for all $i\in[\ell]$, $j\in[j_i,j_{i+1}-1]$, where
  $j_{\ell+1}=m+1$, and $H_{j_i}\neq H_{j_{i+1}}$ for all
  $i\in[\ell-1]$. For all $i\in[\ell]$, let $J_i:=H_{j_i}$. Furthermore, let
  $n_i:=|J_i|$ and $k_i:=j_{i+1}-j_i$ and $q_i:=|\sep^U(\vec
  v^{i_j})|$ and
  \[
  q:=\left|\bag^U(\vec u)\setminus\bigcup_{j=1}^m\bag^U(\vec v_j)\right|.
  \]
  \begin{cs}
    \case1
    For all neighbours $t,t'$ of $h(\vec u)$ in the undirected tree underlying
    $T$ it holds that $\bag^T(t)\cap\bag^T(t')=\emptyset$.\\
  We define $H(\vec u)$ by first taking a complete graph $K_q$, then
  $k_1$ copies of $J_1$, then $k_2$ copies of $J_2$, et cetera, and
  finally $k_\ell$ copies of $J_\ell$. The universes of all these
  copies are disjoint, consecutive intervals of natural numbers. Let
  $K$ be the union of $[q]$ with the first $q_i$ vertices of each of the $k_i$ copies of
  $J_i$
  for all $i\in[\ell]$. Then $K$ is the set of vertices of $H(\vec u)$
  that corresponds to the clique $\bag(\vec u)$. We add edges among the vertices
  in $K$ to turn it into a clique. It is not hard to verify that the
  resulting structure satisfies (i) and (ii).

  \case2
  There are neighbours $t,t'$ of $h(\vec u)$ in the undirected tree underlying
    $T$ such that $\bag^T(t)\cap\bag^T(t')\neq\emptyset$.\\
    Then by Lemma~\ref{lem:lc4}(iii) we have $|\bag^U(\vec u)|=3$, and
    $h(\vec u)$ has at most two children. Hence $m\le 2$, and
    essentially this means we only have two possibilities of how to
    combine the parts $H_1,H_2$ to the graph $H(\vec u)$; either $H_1$
    comes first or $H_2$. We choose the lexicographically smaller possibility.
    We omit the details.
  \end{cs}
  This completes our description of the construction of the graphs
  $H({\vec u})$.

  It remains to prove that $H({\vec u})$ is \IFPC-definable. We first
  define \IFP-formulae $\theta_U(\vec x)$, $\theta_F(\vec x,\vec y)$,
  $\theta_{\comp}(\vec x,y)$, $\theta_{\bag}(\vec x,y)$, $\theta_{\cone}(\vec x,y)$,
  $\theta_{\sep}(\vec x,y)$ such that 
  \begin{align*}
    U&=\big\{\vec u\in V(G)^3\bigmid
  G\models\theta_U[\vec u]\big\},\\
    F&=\big\{(\vec u,\vec v)\in U^2\bigmid
  G\models\theta_F[\vec u,\vec v]\big\},\\
  \comp^U(\vec u)&=\big\{v\in V(G)\bigmid G\models\theta_\comp[\vec
  u,v]\big\}
  &\text{for all }\vec u\in U,
  \end{align*}
  and similarly for $\bag,\cone,\sep$.  Then we define formulae
  $\theta_U^0(\vec x_0,\vec x)$, $\theta_F^0(\vec x_0,\vec x)$ that
  define $D_0$. We have no canonical way of checking that a tuple
  $\vec u_0$ really is the image $g(r(T))$ of the root of a good tree
  decomposition, but all we need is that the graph $D^0(\vec u_0)$
  with vertex set $\big\{\vec u\in V(G)^3\bigmid
  G\models\theta_U^0[\vec u_0,\vec u]\big\}$ and edge set $\big\{(\vec
  u,\vec v)\in U^2\bigmid G\models\theta_F[\vec u_0,\vec u,\vec
  v]\big\}$ has the properties we derive from $T$ being a good tree
  decomposition. In particular, if a node $\vec u$ has a child $\vec
  v$ with $\sep^U(\vec u)\cap\sep^U(\vec v)\neq\emptyset$ or children
  $\vec v_1\neq\vec v_2$ with $\sep^U(\vec v_1)\cap\sep^U(\vec
  v_1)\neq\emptyset$, then $|\bag^U(\vec u)|\le 3$. Once we have defined
  $D^0$, it is straightforward to formalise the definition of the
  graphs $H({\vec u})$ in \IFPC\ and define an \IFPC-interpretation
  $\Gamma(\vec x_0)$ that canonises $G$. We leave the (tedious)
  details to the reader.
\end{proof}

\begin{rem}\label{rem:gen}
  Implicitly, the previous proof heavily depends on the concepts
  introduced in \cite{gro08a}. In particular, the definable directed
  graph $D$ together with the definable mappings $\sep$ and $\comp$
  constitute a \emph{definable tree decomposition}. However, our
  theorem does not follow directly from Theorem~7.2 of \cite{gro08a}.

  The class $\CRD\cap\CL$ of chordal line graphs is fairly restricted,
  and there may be an easier way to prove the canonisation theorem by using
  Proposition~\ref{prop:lc}. The proof given
  here
  has
  the advantage that it generalises to the class of all chordal graphs
  that have a good tree decomposition where the bags of the neighbours
  of a node intersect in a ``bounded way''. We omit the details.
\end{rem}

\section{Further research}
I mentioned several important open problems related to the quest for a
logic capturing \PTIME\ in the survey in Section~\ref{sec:survey}. Further open
problems can be found in~\cite{gro08b}. Here, I will briefly 
discuss a few open problems related to classes closed under taking
induced subgraphs, or equivalently, classes defined by excluding
(finitely or infinitely many) induced subgraphs. 

A fairly obvious, but not particularly interesting generalisation of our positive capturing result is pointed
out in Remark~\ref{rem:gen}. I conjecture that our theorem for
chordal line graphs
can be generalised to the class of chordal claw-free graphs, that is, I conjecture that the
class of chordal claw-free graphs admits \IFPC-definable canonisation. Further
natural classes of graphs closed under taking induced subgraphs are  the classes of disk intersection
graphs and unit disk intersection graphs. It is open whether \IFPC\ or any
other logic captures \PTIME\ on these classes. A very interesting and rich family
of classes of graphs closed under taking induced subgraphs is the family of
classes of graphs of bounded rank width \cite{oumsey06}, or equivalently, bounded clique width \cite{couola00}. It is conceivable that
\IFPC\ captures polynomial time on all classes of bounded rank width. To the best of my
knowledge, currently it is not even known whether isomorphism testing for
graphs of bounded rank width is in polynomial time.

\subsection*{Acknowledgements}
I would like to thank Yijia Chen and Bastian Laubner for valuable
comments on an earlier version of this paper.

\bibliographystyle{plain}
\bibliography{chordal}

\end{document}
